\DeclareMathOperator*{\argmin}{argmin}
\theoremstyle{plain}
\newtheorem{theorem}{Theorem}[section]
\newtheorem{lemma}[theorem]{Lemma}
\newtheorem{claim}[theorem]{Claim}
\newtheorem{corollary}[theorem]{Corollary}
\theoremstyle{definition}
\theoremstyle{remark}
\title{Minimum Congestion Routing of Unsplittable Flows \\ in Data-Center Networks} 
\author[1, 2, 3]{Miguel Ferreira}
\author[1]{Nirav Atre}
\author[1]{Justine Sherry}
\author[4]{Michael Dinitz}
\author[2, 3]{João Luís Sobrinho}
\affil[1]{Carnegie Mellon University}
\affil[2]{Instituto de Telecomunicações}
\affil[3]{Instituto Superior Técnico, Universidade de Lisboa}
\affil[4]{John Hopkins University}
\begin{document}

\maketitle

\begin{abstract}
    
Millions of flows are routed concurrently through a modern data-center. These networks are often built as Clos topologies, and flow demands are constrained only by the link capacities at the ingress and egress points. The minimum congestion routing problem seeks to route a set of flows through a data center while minimizing the maximum flow demand on any link. This is easily achieved by splitting flow demands along all available paths. However, arbitrary flow splitting is unrealistic. Instead, network operators rely on heuristics for routing unsplittable flows, the best of which results in a worst-case congestion of $2$ (twice the uniform link capacities). But is $2$ the lowest possible congestion? If not, can an efficient routing algorithm attain congestion below $2$?

Guided by these questions, we investigate the minimum congestion routing problem in Clos networks with unsplittable flows. First, we show that for some sets of flows the minimum congestion is at least $\nicefrac{3}{2}$, and that it is $NP$-hard to approximate a minimum congestion routing by a factor less than $\nicefrac{3}{2}$. Second, addressing the motivating questions directly, we present a polynomial-time algorithm that guarantees a congestion of at most $\nicefrac{9}{5}$ for any set of flows, while also providing a $\nicefrac{9}{5}$ approximation of a minimum congestion routing. Last, shifting to the online setting, we demonstrate that no online algorithm (even randomized) can approximate a minimum congestion routing by a factor less than $2$, providing a strict separation between the online and the offline setting.

\end{abstract}

\clearpage

\tableofcontents

\clearpage

\pagenumbering{arabic}

\section{Introduction}
\label{sec:introduction}

\quad Today's data-centers are asked to route millions of \textit{flows} simultaneously~\cite{Benson_2010}. 
\unskip\footnote{Each \textit{flow} corresponds a network connection between a source-destination pair, translating to a commodity in network flow problems.}
Each flow is offered with a demand limited by the capacities of links leaving and entering servers, and the routing of these flows seeks to minimize \textit{congestion}~\cite{Alizadeh_2014, Singla_2014, Chiesa_2017, Namyar_2021}. The congestion of a link is the ratio between the total demand routed through the link and the link capacity, and the congestion of a routing is the maximum congestion over all internal~links. If the congestion of a routing is less than $1$, then the demand of every flow is satisfied; otherwise, some flows only obtain a fraction of their demands given by the inverse of the congestion.

\textit{Clos networks} are the de facto topologies underlying most of today's data-centers~\cite{Al-Fares_2008, Greenberg_2009, Roy_2015, Singh_2015, Qureshi_2022, Gangidi_2024, Qian_2024}. In a Clos network, each source (destination) server is linked to a single input (output) top-of-rack (ToR) switch, and each ToR switch is linked to as many middle switches as there are servers per ToR switch~\cite{Clos_1953}. Links have uniform capacities (see Figure~\ref{fig:data-center}). A basic property of Clos networks is their \textit{full bisection bandwidth}, meaning that the capacity of any cut separating all sources from all destinations is at least the aggregate capacity over all links between sources and input switches (destinations and output switches). This property has a notable implication for routing in Clos networks, which motivated their widespread adoption in current data-centers: uniformly splitting flow demand over all source-destination paths yields a routing with congestion at most $1$~\cite{Chiesa_2017}. 

However, in the context of modern data-centers, flows are often \textit{unsplittable}, since forwarding data-packets concurrently on multiple paths requires extensive modifications to current transport protocols, which are difficult to deploy in practice~\cite{Qureshi_2022, Gangidi_2024, Qian_2024}. Therefore, we ask the following fundamental questions concerning minimum congestion routings of unsplittable flows and their efficient computation in Clos networks:

\begin{enumerate}[label=(\textbf{Q\arabic*)}]
    \item \textbf{How close to $\mathbf{1}$ is the congestion of a minimum congestion routing of unsplittable flows?} 
    
    \item \textbf{How well can a minimum congestion routing of unsplittable flows be \linebreak approximated by a polynomial-time algorithm?} 
\end{enumerate}

Prior work offers only tentative answers to these questions. In general networks, the theory community has introduced algorithms for routing unsplittable flows based on randomized rounding of the multi-commodity flow relaxation, and shown that it is possible to approximate a minimum congestion routing within logarithmic factors (in the size of the network)~\cite{Raghavan_1987, Kolman_2006, Chakrabarti_2007}, but that it is NP-hard to approximate it within constant factors~\cite{Chuzhoy_2007}. While these results hold for general network, the symmetric structure of Clos networks allows for better bounds. 

\begin{figure}
    \centering
    \captionsetup{labelfont=bf}
    \begin{minipage}[t]{.49\textwidth}
        \centering
        \includegraphics[width=0.85\textwidth]{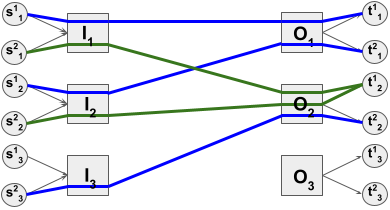} 
        \subcaption{A set of flows without a routing.}
        \label{fig:macro}
    \end{minipage}
    \begin{minipage}[t]{.49\textwidth}
        \centering
        \includegraphics[width=0.85\textwidth]{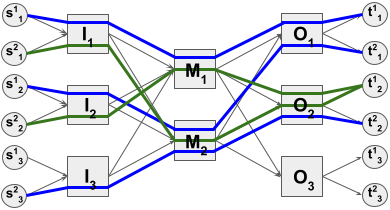} 
        \subcaption{A routing with minimum congestion.}
        \label{fig:clos}
    \end{minipage}
    \caption{The minimum congestion routing problem in Clos networks. Circles symbolize servers, and squares symbolize switches. Lines symbolize flows from source to destination. Flows are represented without and with routing, with middle switches omitted from the former. Figure~\ref{fig:macro} shows a set of flows in a Clos network with $2$ middle switches and $3$ input (output) ToR switches: flows $(I_1, O_1)$, $(I_2, O_1)$, and $(I_3, O_2)$ have demand $1$ (in blue), and flows $(I_1, O_2)$ and $(I_2, O_2)$ have demand $\nicefrac{1}{2}$ (in green). Figure~\ref{fig:clos} shows a minimum congestion routing. Every link except for $M_2 O_2$ is traversed by at most one flow, and thus has congestion at most $1$. Link $M_2 O_2$ is traversed by flows $(I_3, O_2)$ and $(I_2, O_2)$, and thus has congestion $\nicefrac{3}{2}$. Hence, the routing has congestion $\nicefrac{3}{2}$.}
    \label{fig:data-center}
\end{figure}

In Clos networks, the networking community has proposed heuristics, and shown their good empirical performance compared to uniformly splitting flows. The long-standing heuristic is Equal-Cost-Multi-Path (ECMP)~\cite{Al-Fares_2008}, which gives logarithmic worst-case congestion and approximation factor. ECMP assigns each flow to a path chosen uniformly~at~random. State-of-the-art heuristics~\cite{Melen_1989, Al-Fares_2010, Curtis_2011, Alizadeh_2014, Chiesa_2017, Hsu_2020} are grounded in two algorithms, both of which give worst-case congestion and approximation factor of $2$. The \textit{Sorted Greedy algorithm}~\cite{Al-Fares_2010} sorts the flows in decreasing order of demands and assigns each flow to a path whose congestion is minimum; the congestion of a path is the maximum congestion over its links (see Figure~\ref{fig:intro-online_lower_bound_congestion}). The \textit{Melen-Turner algorithm}~\cite{Melen_1989}) first constructs a new Clos network from the original one, and maps the flows to the new network. In the new network there are multiple copies of each original ToR switch, with flows assigned to copies of its ToR switches such that there are at most as many flows per copy as there are middle switches and demands decrease with the indexing~of the copies. Then, the algorithm finds a link-disjoint routing for the flows in the new network, thus obtaining a routing for them in the original one (see Figure~\ref{fig:worst-case-melen-turner-algorithm}).

In view of prior work, questions \textbf{Q1} and \textbf{Q2} can be restated as follows. Is $2$ the tight factor for congestion and approximation, in which case known heuristics are optimal? Or does the structure of Clos networks allow for the existence of routings with congestion close to $1$ and their polynomial-time computation?

\begin{figure}[t!]
    \centering
    \captionsetup{labelfont=bf}
    \begin{subfigure}{.49\textwidth}
        \includegraphics[width=0.96\textwidth]{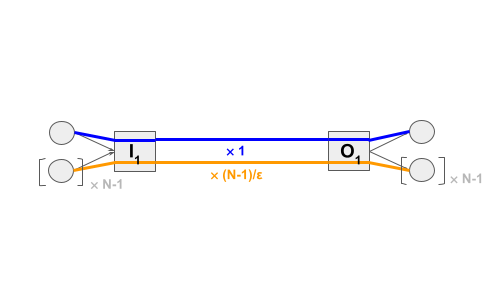} 
        \subcaption{A set of flows in the original network.}
        \label{fig:worst-case-melen-turner-algorithm-1}
    \end{subfigure}
    \begin{subfigure}{.49\textwidth}
        \includegraphics[width=0.96\textwidth]{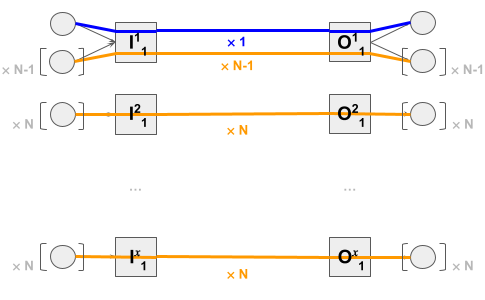} 
        \subcaption{The set of flows mapped to the new network.}
        \label{fig:worst-case-melen-turner-algorithm-2}
    \end{subfigure}

    \vspace{10pt}

    \begin{subfigure}{.49\textwidth}
        \includegraphics[width=0.96\textwidth]{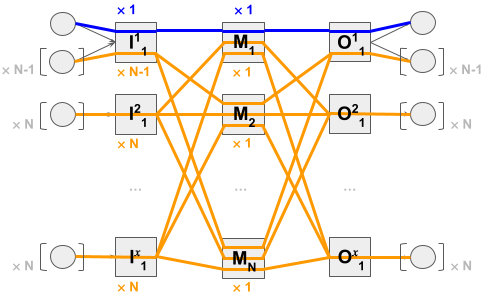} 
        \subcaption{Link-disjoint routing in the new network.}
        \label{fig:worst-case-melen-turner-algorithm-3}
    \end{subfigure}
    \begin{subfigure}{.49\textwidth}
        \includegraphics[width=0.96\textwidth]{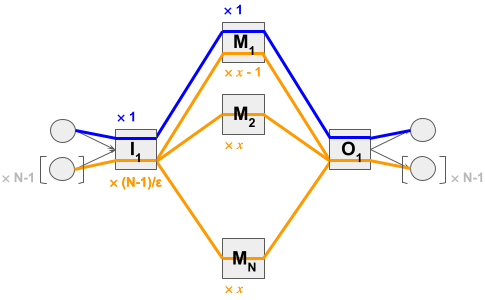} 
        \subcaption{Corresponding routing in the original network.}
        \label{fig:worst-case-melen-turner-algorithm-4}
    \end{subfigure}
    \caption{Worst-case flows for the Melen-Turner algorithm. Figure~\ref{fig:worst-case-melen-turner-algorithm-1} shows a set of flows in~a~Clos network with $N$ middle switches composed of two types of flows: one type 1 flow with demand $1$ (in blue) and $\nicefrac{(N-1)}{\epsilon}$ type 2 flows each with demand $\epsilon$ (in orange), for some small $\epsilon > 0$ (only the first input and output switches are shown). The minimum congestion routing has congestion $1$, with the type 1 flow assigned to some middle switch, and $\nicefrac{1}{\epsilon}$ type 2 flows assigned to each of the remaining $N-1$. Figure~\ref{fig:worst-case-melen-turner-algorithm-2} shows the mapping of the flows to the new Clos network, where each ToR switch is divided into $x$ copies, with $x \coloneqq \nicefrac{1}{N} + \nicefrac{1}{\epsilon} - \nicefrac{1}{(N \times \epsilon)}$. Figure~\ref{fig:worst-case-melen-turner-algorithm-3} shows a link-disjoint routing for the flows in the new network, and Figure~\ref{fig:worst-case-melen-turner-algorithm-4} shows the corresponding routing in the original one. The routing returned by the algorithm has congestion $2 - \epsilon - \nicefrac{(1-\epsilon)}{N}$. Consequently, the worst-case congestion and the approximation factor of the algorithm approach $2$ as $N$ grows large.}
    \label{fig:worst-case-melen-turner-algorithm}
\end{figure}

\section{Summary of Results}
\label{sec:summary_results}

\quad We introduce new algorithms and bounds for the minimum congestion routing problem in Clos networks with unsplittable flows in two settings: the \textit{offline setting}, where a set of flows is offered to the data-center at once and can be routed altogether, and the \textit{online setting}, where a sequence of flows is offered to the data-center one at a time, and each flow must be routed before receiving the next flow and without re-routing the previously routed flows. \\

\textbf{Main Result -  Offline: A New Routing Algorithm.} We design a routing algorithm that breaks through the barrier of $2$ established by known heuristics. The new algorithm routes a set of flows in two phases bridged by a threshold on the congestion of each phase. The first phase routes a subset of the flows via the Melen-Turner algorithm with congestion at most the threshold. The second phase routes the remaining flows via the Sorted Greedy algorithm without increasing congestion beyond the threshold. While the worst-case congestion of the algorithms underlying each phase is $2$, by setting the threshold to $\nicefrac{9}{5}$ the new algorithm mitigates the worst-case flows of both, thus improving the worst-case congestion to $\nicefrac{9}{5}$.    

In the present form, the algorithm returns a routing with congestion at most $\nicefrac{9}{5}$, but fails to approximate a minimum congestion routing by a factor at most $\nicefrac{9}{5}$, since the minimum congestion could be less than $1$. If the minimum congestion were known \textit{apriori}, then the algorithm could be easily fixed by setting the threshold to $\nicefrac{9}{5}$ times the minimum congestion; given that it is not, a more nuanced bridging of the two phases is required. In the final form, the first phase of the algorithm routes a subset of the flows that includes all those with demand at least $\nicefrac{1}{3}$ times the minimum with congestion at most $\nicefrac{9}{5}$ times a suitable lower bound on the minimum, and the second phase routes the remaining flows without increasing congestion beyond $\nicefrac{9}{5}$ times the minimum.

\begin{theorem}
	There is a polynomial-time algorithm that returns a routing with congestion at most $\nicefrac{9}{5}$ and approximates a minimum congestion routing by a factor at most $\nicefrac{9}{5}$.
\end{theorem}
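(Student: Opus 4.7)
My plan is a two-phase algorithm tied together by a congestion threshold $\tau$. Because the optimum $\mathrm{OPT}$ is unknown, I first compute a polynomial-time lower bound $L \le \mathrm{OPT}$ by taking the maximum, over all ToR switches, of the aggregate demand entering or leaving that switch divided by the number $N$ of middle switches, together with the largest single-flow demand. Both quantities lower-bound $\mathrm{OPT}$: the first by link-capacity accounting at the cut isolating a ToR from the rest of the network, the second because an unsplittable flow fully occupies a middle-switch link. I then set $\tau = \nicefrac{9}{5}\,L$, so a routing of congestion at most $\tau$ is both a $\nicefrac{9}{5}$-approximation and has congestion at most $\nicefrac{9}{5}$, the latter since $L \le 1$ whenever the aggregate demand at each ToR respects its ingress/egress capacity.

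Partition the flows into \emph{large} (demand $\ge L/3$) and \emph{small} (demand $< L/3$). Phase 1 runs Melen-Turner on the large flows; Phase 2 augments the partial routing by applying Sorted Greedy to the small flows in decreasing order of demand. For Phase 1, the pathological instance of Melen-Turner in Figure~\ref{fig:worst-case-melen-turner-algorithm} hinges on $\epsilon \to 0$, so forbidding $\epsilon$ below $L/3$ should cap the number of ToR-copies in Melen-Turner's reduced network enough to bring the congestion bound down from $2L - o(L)$ to $\nicefrac{9}{5}\,L$; I would obtain this by rerunning the Melen-Turner analysis with the demand lower bound $L/3$ in place of the vanishing $\epsilon$, verifying that each copy of a ToR receives flows summing to at most roughly $\nicefrac{9}{5}\,L$.

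For Phase 2, the naive averaging argument only gives $\min_{k} c_k \le 2\,\mathrm{OPT}$ over the $N$ paths between the endpoints of the next flow, since the total load at each endpoint ToR is at most $\mathrm{OPT}\cdot N$; adding a demand $d < L/3$ yields a bound of $\nicefrac{7}{3}\,L$, which is too weak. The main obstacle is tightening this to $\nicefrac{9}{5}\,L$ by coupling the two phases: after Phase 1 every link has congestion at most $\nicefrac{9}{5}\,L$, and combining this per-link ceiling with the per-ToR averaging ought to leave enough headroom on some path to absorb a small flow without breaching $\nicefrac{9}{5}\,L$. The ``more nuanced bridging'' flagged by the authors is exactly where I expect the difficulty: one may need to reassign a carefully chosen subset of the borderline-large flows from Phase 1 into Phase 2 so that the Phase 1 residual load distribution is favorable enough for the Phase 2 averaging to close the gap at the specific constant $\nicefrac{9}{5}$. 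Balancing these bounds so that neither phase's worst case is triggered is the delicate step that delivers the $\nicefrac{9}{5}$ constant, and polynomial runtime follows immediately since both Melen-Turner and Sorted Greedy are polynomial and the classification by $L$ is a single pass.
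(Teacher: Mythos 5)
Your high-level architecture matches the paper's: the same lower bound $L$ (max single demand vs.\ per-ToR aggregate over $N$), the same threshold $P=\nicefrac{9}{5}\,L$, a Melen--Turner-style Phase~1 followed by a Sorted-Greedy Phase~2, and polynomial runtime for free. However, there are two genuine gaps.

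First, your Phase~1 does not actually achieve congestion $\le\nicefrac{9}{5}\,L$. If you run Melen--Turner on \emph{all} flows with demand $\ge L/3$, the per-ToR congestion is bounded by the sum of the per-copy maximum demands $D^1+D^2+\cdots$. A ToR may receive three full copies of flows with demands $\ge L/3$; by the (correct) fact that a flow in the $k$'th copy has demand at most $\mathrm{OPT}/k$, the congestion can be as large as $\bigl(1+\tfrac12+\tfrac13\bigr)\mathrm{OPT}=\nicefrac{11}{6}\,\mathrm{OPT}>\nicefrac{9}{5}\,\mathrm{OPT}$, which can equal $\nicefrac{11}{6}\,L$. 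The paper avoids this precisely by \emph{not} using a fixed demand threshold: it builds a maximal subset $\mathcal{F}_1$ that always accepts flows as long as a per-ToR property (P3) caps the running sum $\sum_k D^k$ at $P$, and rejects a flow the moment accepting it would overshoot. This selection rule guarantees both the Phase~1 ceiling $p\cdot\mathrm{OPT}$ and (nontrivially) that every flow with demand $>\mathrm{OPT}/3$ is still accepted, via a counting argument that bounds how many such flows can be incident to one ToR.

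Second, you correctly compute that the naive averaging bound for Phase~2 only yields $\nicefrac{7}{3}\,L$ and acknowledge you do not see how to close the remaining gap — but this is exactly the crux of the proof, not an optional refinement. The paper closes it by exploiting the \emph{reason} a flow $f$ fell into $\mathcal{F}_2$: if $f$ was rejected by its input ToR, then (from properties P2--P3) every link leaving that ToR already carries more than $(p-1)\,L$ of Phase-1 demand. Combining this floor with the hypothesis that every path available to $f$ is congested above $p\cdot\mathrm{OPT}-\mathrm{dem}(f)$ leads to a counting contradiction against the total demand that can possibly leave that ToR. Your ``demand $\ge L/3$'' partition throws away this coupling between rejection and per-link load, so the contradiction argument has no hook. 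In short: you have the right scaffolding, but the specific feasibility-driven acceptance rule (rather than a demand threshold) and the resulting per-link load floor for rejected flows are the two ideas your proposal is missing, and both are essential to reaching $\nicefrac{9}{5}$.
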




\textbf{Offline: Lower Bounds on Congestion and Approximation.} We present lower bounds that show that, despite the special structure of Clos networks, a minimum congestion routing cannot have worst-case congestion less than $\nicefrac{3}{2}$, nor be approximated by a factor less than $\nicefrac{3}{2}$. Another basic property of Clos networks is that, if flow demands are unit, implying that sources and destinations are restricted to at most one flow, then the congestion of a minimum congestion routing is at most $1$, and it is possible to find such a routing in polynomial-time~\cite{Hwang_1983, Lovasz_2009}. This follows from the equivalence between finding link-disjoint routings in Clos networks and decomposing bounded degree bipartite graphs into as many matchings. We show that, as soon as this premise is relaxed to allow half-unit flow demands, the rise of new packing constraints entails that the previous property no longer holds true.

\begin{theorem}
    \label{thm:intro-offline_limit_congestion}
    There is a set of flows for which the congestion of a minimum congestion routing is at least $\nicefrac{3}{2}$.
\end{theorem}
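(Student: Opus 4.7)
The plan is to take the exact set of flows already displayed in Figure~\ref{fig:data-center} (the Clos network with $2$ middle switches and $3$ input/output ToRs, with demand-$1$ flows $(I_1,O_1),(I_2,O_1),(I_3,O_2)$ and demand-$\nicefrac{1}{2}$ flows $(I_1,O_2),(I_2,O_2)$) as the witness, and argue that every routing on this instance has congestion at least $\nicefrac{3}{2}$ by a short case analysis on how the flows originating at $I_1$ and $I_2$ are assigned to middle switches.

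The key observation is that each of $I_1$ and $I_2$ is the source of exactly two flows, one of demand $1$ and one of demand $\nicefrac{1}{2}$. If both flows out of some $I_i \in \{I_1, I_2\}$ are routed through the same middle switch $M_j$, the link $I_i M_j$ carries load $\nicefrac{3}{2}$ and we are done. So the interesting case is when both $I_1$ and $I_2$ split their two flows between $M_1$ and $M_2$. Let $M_{a_i}$ denote the middle switch chosen by $I_i$ for its demand-$1$ flow to $O_1$.

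If $a_1 = a_2$, then the link $M_{a_1}O_1$ carries the two unit flows $(I_1,O_1)$ and $(I_2,O_1)$ for a congestion of $2 \geq \nicefrac{3}{2}$. Otherwise $\{a_1,a_2\}=\{1,2\}$; combined with the splitting assumption, this forces the two demand-$\nicefrac{1}{2}$ flows $(I_1,O_2)$ and $(I_2,O_2)$ to also traverse different middle switches, so both links $M_1 O_2$ and $M_2 O_2$ already carry load $\nicefrac{1}{2}$ before $(I_3,O_2)$ is routed. Whichever middle switch carries $(I_3,O_2)$, the corresponding link into $O_2$ receives an additional unit of demand, yielding congestion $\nicefrac{3}{2}$.

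Nothing is technically delicate here: the entire argument is a four-way case split driven by the pigeonhole on two middle switches. The only part that requires some care is making sure the cases are genuinely exhaustive, in particular recognizing that the two demand-$\nicefrac{1}{2}$ flows to $O_2$ are forced onto different middle switches \emph{before} $(I_3,O_2)$ is placed, so that both incoming links of $O_2$ are already partially loaded. Once that is articulated, the $\nicefrac{3}{2}$ lower bound drops out immediately, matching the routing shown in Figure~\ref{fig:clos} and establishing the theorem.
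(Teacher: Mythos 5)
Your argument is correct, and your witness is exactly the one in the paper: the set of flows in Figure~\ref{fig:data-center} is the $N=2$, $R=3$ instance of the general construction underlying Theorem~\ref{thm:limit_congestion} (the two unit flows into $O_1$ are the cross gadget of size $2$, the two half-unit flows into $O_2$ are the ``type~2'' flows, and $(I_3,O_2)$ is the ``type~3'' flow). Where you diverge is in the argument, not the construction. The paper first proves Lemma~\ref{lem:gadget} for cross gadgets of arbitrary size $N$, which pins down the essentially unique congestion-$1$ routing of the unit flows and isolates a distinct ``free'' middle switch at each input; it then splits into two cases on whether the type~2 flows use those free switches. You bypass that machinery with a direct pigeonhole argument on the two middle switches: either some $I_i \in \{I_1,I_2\}$ stacks both of its flows (load $\nicefrac{3}{2}$ on $I_iM_j$), or $a_1=a_2$ (load $2$ on $M_{a_1}O_1$), or $a_1\neq a_2$ forces the two half-unit flows onto different links into $O_2$, after which $(I_3,O_2)$ pushes one of them to $\nicefrac{3}{2}$. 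This is exhaustive and sound, and it is arguably the cleaner way to prove the bare existence statement of Theorem~\ref{thm:intro-offline_limit_congestion}. What the paper's heavier cross-gadget route buys is generality: it proves the $\nicefrac{3}{2}$ bound for all $C_{N,R}$ with $N\ge 2$, $R\ge N+1$, and---more importantly---the cross-gadget lemma is reused wholesale in the NP-hardness reduction of Theorem~\ref{thm:limit_approximation}, so the paper amortizes the lemma across both lower bounds. Your $N=2$ pigeonhole argument would not directly generalize to large $N$, where the ``free middle switch per input, all distinct'' structure is exactly what Lemma~\ref{lem:gadget} supplies.
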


\begin{theorem}
    \label{thm:intro-offline_limit_approximation}
    For a set of flows with unit or half-unit demands, deciding whether there is a routing with congestion at most $1$ is NP-complete.
\end{theorem}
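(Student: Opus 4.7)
Membership in NP is immediate: a routing is encoded by assigning each flow to a middle switch, and for any such assignment one can sum the demands on every link in polynomial time and check that none exceeds $1$.

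For NP-hardness, the plan is to reduce from $3$-edge-coloring of cubic graphs, which is NP-complete by Holyer's theorem. Given a cubic graph $G=(V,E)$, I would build a Clos network with $3$ middle switches $M_1,M_2,M_3$ (playing the role of the three colors) and a pair of ToRs $I_v,O_v$ for each $v\in V$. Each edge $e=\{u,v\}\in E$ is oriented (arbitrarily) and becomes a single unit flow from $I_u$ to $O_v$, so that a congestion-$1$ assignment of these ``edge-flows'' to middle switches projects to a $3$-coloring of $E$. If $v$ had degree $3$ with all edges oriented outward or all oriented inward, the instance at $I_v$ (respectively $O_v$) would already force the three unit flows onto three distinct middle switches, exactly encoding a proper coloring at $v$.

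The crux is that, in a Clos network, the link-capacity constraints at $I_v$ and at $O_v$ are independent, whereas a proper edge coloring couples the three edges at $v$ regardless of orientation. To install the missing coupling, I would attach a per-vertex gadget of half-unit (and possibly auxiliary unit) flows between $I_v$ and $O_v$. The packing levers are that a half-unit flow cannot share a link with a unit flow while two half-unit flows can share a link, and that such sharing is only \emph{forced} when the packing is tight. By calibrating the number of auxiliary flows against the orientation split $(d_v^{out},d_v^{in})\in\{(0,3),(1,2),(2,1),(3,0)\}$, the gadget would squeeze the instance against its capacity so that the set of middle switches used by out-edges at $I_v$ together with those used by in-edges at $O_v$ is forced to be $\{M_1,M_2,M_3\}$, which is exactly proper coloring at $v$.

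The main obstacle is the correct design and verification of this coupling gadget for the mixed orientation splits $(2,1)$ and $(1,2)$: the gadget must (i) enforce the desired local constraint in spite of the bipartite decoupling, (ii) remain feasible whenever $G$ is $3$-edge-colorable, and (iii) use only flows local to the pair $(I_v,O_v)$ so that distinct vertex gadgets do not interact. My first attempts suggest that the natural ``tight-packing'' gadgets tend to force the incoming unit flow at $O_v$ to coincide with, rather than avoid, the middle switch left free by the outgoing unit flows at $I_v$; overcoming this likely requires mixing unit and half-unit auxiliaries, or routing some auxiliary half-unit flows to a private ``sink'' ToR attached to each vertex. Once a gadget with the three properties is available, the reduction produces $O(|V|+|E|)$ flows in a Clos network with $3$ middle switches and $O(|V|)$ ToRs, so it is polynomial, and the biconditional ``$G$ is $3$-edge-colorable $\iff$ the instance admits a routing of congestion at most $1$'' follows from local gadget analysis combined with property (iii), yielding NP-completeness.
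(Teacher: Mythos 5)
Your NP membership argument is correct and matches the paper. Your choice of reduction target (3-edge-coloring of bounded-degree graphs) is also the one the paper uses. However, you have correctly identified the crux of the problem — building a gadget that couples the a priori independent constraints at $I_v$ and $O_v$ — and you have not solved it. The proposal is therefore incomplete at precisely the point where the proof must do real work: you explicitly state that the mixed-orientation cases $(2,1)$ and $(1,2)$ defeat your tight-packing gadgets, and that the fix ``likely requires'' further mixing of unit and half-unit auxiliaries or private sink ToRs, without producing a design or a verification. Without that gadget there is no biconditional and no reduction.

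The paper resolves the coupling difficulty by changing the encoding of an edge so the problem you are fighting never arises. Rather than orienting $e=\{u,v\}$ and emitting a single unit flow from $I_u$ to $O_v$, the paper creates a dedicated \emph{edge block} $(I_{e},O_{e})$ for each edge, filled with two unit ``edge flows'' that occupy two of the three middle switches at $O_e$, and represents $e$ by \emph{two} half-unit ``incident flows,'' one leaving $u$'s vertex block and one leaving $v$'s, both entering $O_e$; capacity then forces both incident flows onto the single remaining middle switch, which is the color of $e$. The vertex-side coupling is done not between $I_v$ and $O_v$ but entirely on the input side of a three-ToR vertex block carrying a size-3 cross gadget (the set of $N(N-1)$ unit flows analyzed in Lemma~\ref{lem:gadget}); the gadget's property that each input switch has a \emph{distinct} free middle switch forces the three half-unit incident flows leaving $v$'s block to use three distinct colors. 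Notice this construction is symmetric in $u$ and $v$, so it sidesteps orientation splits altogether, and distinct vertex and edge gadgets interact only through the half-unit incident flows, giving you locality for free. If you want to salvage your outline, replacing your ``single oriented unit flow per edge'' with ``two half-unit flows per edge into a fresh edge block'' is exactly the missing idea, and the cross gadget is the vertex gadget whose existence your plan presupposes.
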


The proofs of both theorems make use of a special set of flows with unit demands that we call a \textit{cross gadget}, which in a Clos network with $N$ middle switches maps the sources of $N$ input switches, $N-1$ sources per input switch, to the destinations of $N-1$ output switches, $N$ destinations per output switch (see Figure~\ref{fig:cross-gadget}). The key property of the cross gadget is that in a routing with congestion $1$ there is a different middle switch at each input switch to which no unit demand flow is assigned; consequently, any additional flows leaving each of these input switches are assigned to different middle switches. While in Theorem~\ref{thm:intro-offline_limit_congestion} the cross gadget is used to design a set of flows for which no routing with congestion $1$ exists, in Theorem~\ref{thm:intro-offline_limit_approximation} it is used to establish a reduction from the $3$-edge coloring problem. The second theorem implies a hardness of approximation result.

\begin{corollary}
    No polynomial-time algorithm can approximate the congestion of a minimum congestion routing by a factor less than $\nicefrac{3}{2}$ unless $P = NP$.
\end{corollary}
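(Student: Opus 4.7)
The plan is a standard gap-reduction argument that promotes the NP-completeness of Theorem~\ref{thm:intro-offline_limit_approximation} into a hardness-of-approximation lower bound. First I would exploit the crucial discreteness of the instances produced in Theorem~\ref{thm:intro-offline_limit_approximation}: since every flow has demand in $\{\nicefrac{1}{2}, 1\}$ and every link's congestion is a sum of demands of flows routed through it, the congestion of any routing and hence the congestion of a minimum congestion routing lies in $\{0, \nicefrac{1}{2}, 1, \nicefrac{3}{2}, 2, \dots\}$. In particular, on unit/half-unit instances, the minimum congestion is either at most $1$ or at least $\nicefrac{3}{2}$; there is no value strictly in between.

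Next I would translate this observation into a promise gap. Theorem~\ref{thm:intro-offline_limit_approximation} asserts that it is NP-complete to decide, for a unit/half-unit instance, whether the minimum congestion is at most $1$. By the discreteness above, this is equivalent to the promise problem of distinguishing instances with minimum congestion at most $1$ from those with minimum congestion at least $\nicefrac{3}{2}$. This gap-version is therefore also NP-hard.

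Now I would argue the contrapositive. Suppose some polynomial-time algorithm $\mathcal{A}$ approximates the minimum congestion routing within a factor $\alpha < \nicefrac{3}{2}$. Given a unit/half-unit instance, run $\mathcal{A}$ and examine the congestion $c_\mathcal{A}$ of the returned routing. If the minimum congestion is at most $1$, then $c_\mathcal{A} \leq \alpha \cdot 1 < \nicefrac{3}{2}$; if the minimum congestion is at least $\nicefrac{3}{2}$, then trivially $c_\mathcal{A} \geq \nicefrac{3}{2}$. Comparing $c_\mathcal{A}$ against $\nicefrac{3}{2}$ therefore decides the NP-hard promise problem in polynomial time, yielding $P = NP$.

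I do not expect a significant obstacle: the discreteness observation is immediate from the demand structure, and the gap reduction is then essentially mechanical. The only point that deserves a line of care is ensuring the argument uses the instances of Theorem~\ref{thm:intro-offline_limit_approximation} (unit and half-unit demands), since that is precisely what guarantees the required gap of $\nicefrac{3}{2}$ between the yes and no cases and hence the exact $\nicefrac{3}{2}$ factor in the corollary.
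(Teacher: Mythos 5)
Your argument is correct and is exactly the (implicit) gap reduction the paper has in mind: the half-integrality of congestion on unit/half-unit instances means the NP-complete decision problem of Theorem~\ref{thm:intro-offline_limit_approximation} already has a built-in gap of $\nicefrac{3}{2}$, and any $\alpha$-approximation with $\alpha < \nicefrac{3}{2}$ would resolve it. No gaps; this matches the paper's reasoning.
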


\begin{figure}[t!]
    \centering
    \captionsetup{labelfont=bf}
    \begin{subfigure}{.49\textwidth}
        \centering
        \captionsetup{labelfont=bf}
        \includegraphics[width=0.85\textwidth]{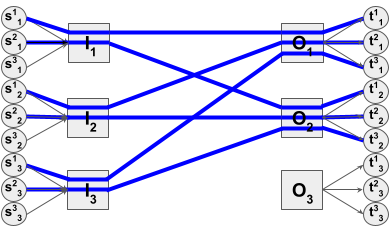} 
        \subcaption{The cross gadget without routing.}
    \end{subfigure}
    \begin{subfigure}{.49\textwidth}
        \centering
        \captionsetup{labelfont=bf}
        \includegraphics[width=0.85\textwidth]{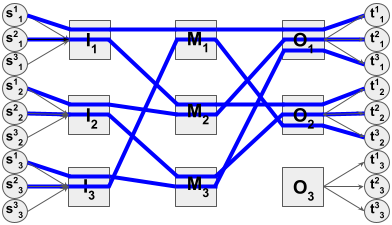} 
        \subcaption{A routing for the gadget with congestion~$1$.}
    \end{subfigure}
    \caption{The cross gadget underlying Theorems~\ref{thm:intro-offline_limit_congestion}  and \ref{thm:intro-offline_limit_approximation} in a Clos network with $3$ middle switches.}
    \label{fig:cross-gadget}
\end{figure}

\textbf{Online: Lower Bounds on Congestion and Approximation.} In practice, it is important to also be able to route flows \emph{online}. In this context, we prove a separation between the offline and online settings. While in the offline setting the new algorithm has worst-case congestion less than $2$, in the online setting we show that no deterministic algorithm can ensure this property even if flow demands are unitary. Moreover, we show that randomizing the routing choices does not help. 

\begin{theorem}
    \label{thm:intro-online}
    For every online algorithm, deterministic or randomized, there is a sequence of flows with unit demands for which the congestion of the algorithm is at least $2$.
\end{theorem}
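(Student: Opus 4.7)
The proof splits naturally into a deterministic lower bound established by an adaptive adversary, followed by an extension to randomized algorithms via Yao's minimax principle.

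For the deterministic case, I would construct an adaptive adversary that reveals unit-demand flows one at a time, each chosen on the basis of the algorithm's prior routing decisions. The adversary's objective is to drive the algorithm into a \emph{blocked} state: a moment at which some pending flow $(I,O)$ finds every middle switch $M_k$ unavailable, in the sense that either link $I M_k$ or link $M_k O$ already carries a unit of flow. Since the algorithm must still route $(I,O)$ through some middle, its only options push a saturated link to congestion $2$. The adversary proceeds in phases: each phase issues a batch of flows incident to a previously untouched input or output ToR, which --- since each of the $N$ adjacent middle links has unit capacity and must absorb at most one flow --- obliges the algorithm to spend distinct middle switches on that side. By alternating phases between the $I$-side and the $O$-side of a target pair and using the algorithm's committed middle-switch choices to steer the next revelation, a counting argument will show that after $O(N)$ phases the middles used by $I$ and the middles used by $O$ jointly cover $\{M_1, \ldots, M_N\}$, at which point the adversary reveals the triggering flow $(I,O)$. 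The structural rigidity that makes the case analysis tight is the same one underpinning the cross gadget of Theorem~\ref{thm:intro-offline_limit_congestion}: under unit demands, congestion-$1$ routings are tightly constrained by bipartite matching conditions at each ToR, so the algorithm has little room to avoid exhaustion.

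For the randomized part I would invoke Yao's minimax principle, reducing the task to constructing a distribution $\mathcal{D}$ over flow sequences such that every deterministic online algorithm incurs expected congestion at least $2$ on inputs drawn from $\mathcal{D}$. A natural choice is to take $\mathcal{D}$ to be the distribution obtained by applying a uniformly random relabelling of the middle switches (and of the input and output ToRs) to a fixed template sequence distilled from the deterministic adversary's strategy; by symmetry, any deterministic algorithm's responses to a sample from $\mathcal{D}$ are distributed uniformly at random with respect to the adversary's internal naming, so the deterministic case analysis carries through in expectation.

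The main obstacle I anticipate is verifying that the phased strategy forces the blocked state against \emph{every} deterministic algorithm, not just a specific greedy-style one. The algorithm has genuine flexibility in how it diversifies middle switches across phases, and a clever spreading policy could try to prolong the process indefinitely. Overcoming this will require identifying and propagating a monotone invariant --- roughly, a potential counting the number of middles already "dead" for the target pair after each phase --- together with a proof that no algorithmic choice can prevent the potential from reaching $N$ within a controlled number of phases.
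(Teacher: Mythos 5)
Your high-level plan (adaptive adversary for the deterministic case, Yao's minimax for the randomized case) has the right shape, but both halves stop at exactly the point where the hard work begins, and the paper's construction is quite different from what you outline.

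For the deterministic case, your phased adversary tries to saturate the $I$-side and $O$-side middle links of a target pair $(I,O)$ until no middle switch is free for both. The difficulty you flag at the end is not a technical annoyance but the whole problem: the algorithm sees everything the adversary has revealed and can coordinate its choices. Concretely, after the adversary sends $N-1$ flows out of $I$ and the algorithm commits them to $N-1$ distinct middles (leaving some $M_a$ free on the $I$-side), the adversary can send at most $N-1$ flows into $O$ (since $O$ has only $N$ destinations and the target occupies one); the algorithm is free to leave $M_a$ untouched on the $O$-side too, and then $(I,O)$ routes with congestion $1$. Alternating sides does not escape this: with unit demands the link-disjointness constraints at each ToR always leave the algorithm at least one compatible middle unless the adversary engineers a genuine \emph{conflict}. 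The paper resolves this not with a long phased potential argument but with two short sequences $X = (X_1, X_2)$ and $Y = (Y_1, Y_2)$ sharing a common prefix $X_1 = Y_1$: Lemma~\ref{lem:sequences} shows that every congestion-$1$ routing of $X$ forces the $(I_1,O_1)$ and $(I_2,O_2)$ flows of the prefix onto the \emph{same} set of $\nicefrac{N}{2}$ middles, while every congestion-$1$ routing of $Y$ forces them onto \emph{complementary} sets. A deterministic online algorithm must commit to the prefix routing before learning whether the suffix is $X_2$ or $Y_2$, so it fails one of the two. This is an irrevocability argument, not a saturation argument, and it is what makes the proof go through cleanly.

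For the randomized case, Yao is indeed the right tool, but the distribution you propose (a uniformly random relabelling of middles and ToRs applied to a fixed template) is unlikely to drive the expected congestion all the way to $2$. Relabelling middle switches is inert, since any deterministic algorithm can permute its own internal naming; and with only the two sequences $X$ and $Y$, a uniform mix gives an expected congestion of at most $\nicefrac{3}{2}$, since a deterministic algorithm always wins on at least one of them. To actually approach $2$, you need the probability that the algorithm achieves congestion $1$ to go to $0$, which requires independence across many decisions. The paper does this by tiling $S = \lfloor \nicefrac{R}{3} \rfloor$ disjoint blocks of ToR switches, placing an independent copy of the $X/Y$ dilemma in each block, and drawing a uniform supersequence from the resulting $2^S$ possibilities. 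Lemma~\ref{lemma:supersequences} shows any deterministic algorithm gets congestion $1$ on at most one of the $2^S$ supersequences, giving an expected congestion of at least $2 - \nicefrac{1}{2^S}$. Without this product structure, your relabelling distribution does not concentrate the failure probability, and the bound you would obtain is strictly weaker than $2$.
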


\begin{figure}[t!]
    \centering
    \begin{minipage}[t]{.49\textwidth}
        \centering
        \includegraphics[width=0.8\textwidth]{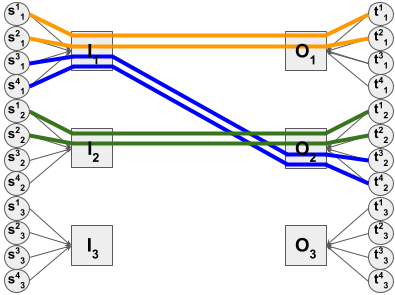} 
        \subcaption{Sequence $X$.}
        \label{fig:intro-online_lower_bound_congestion_1}
    \end{minipage}
    \begin{minipage}[t]{.49\textwidth}
        \centering
        \includegraphics[width=0.8\textwidth]{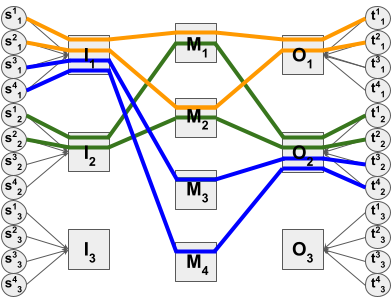} 
        \subcaption{Congestion $1$ routing of $X$.}
        \label{fig:intro-online_lower_bound_congestion_2}
    \end{minipage}
    
    \vspace{5pt}
    
    \begin{subfigure}{.49\textwidth}
        \centering
        \captionsetup{labelfont=bf}
        \includegraphics[width=0.8\textwidth]{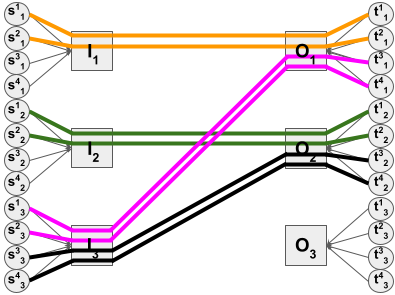} 
        \subcaption{Sequence $Y$.}
        \label{fig:intro-online_lower_bound_congestion_3}
    \end{subfigure}
    \begin{subfigure}{.49\textwidth}
        \centering
        \captionsetup{labelfont=bf}
        \includegraphics[width=0.8\textwidth]{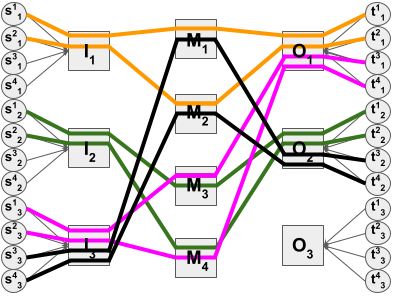} 
        \subcaption{Congestion $1$ routing of $Y$.}
        \label{fig:intro-online_lower_bound_congestion_4}
    \end{subfigure}
    
    \caption{The sequences of flows underlying Theorem~\ref{thm:intro-online} in a Clos network with $3$ middle switches, from which worst-case flows for the Sorted Greedy algorithm can be derived. Figures~\ref{fig:intro-online_lower_bound_congestion_1}~and \ref{fig:intro-online_lower_bound_congestion_2} show the two sequences denoted by $X = (X_1, X_2)$ and $Y = (Y_1, Y_2)$, with $X_1 = Y_1$ and $X_2 \neq Y_2$. Subsequence $X_1$ consists of two flows $(I_1, O_1)$ and two flows $(I_2, O_2)$, $X_2$ of two flows $(I_1, O_2)$, and $Y_2$ of two flows $(I_3, O_1)$ and two flows $(I_3, O_2)$. Figures~\ref{fig:intro-online_lower_bound_congestion_3} and~\ref{fig:intro-online_lower_bound_congestion_4} show the unique routing for each sequence with congestion $1$. If flows in $X_1$ have demand $1 - \epsilon$ and those in $X_2$ and $Y_2$ have demand $1$, then the congestion of the routing returned by the Sorted Greedy algorithm is $2-\epsilon$ for at least one of the sequences depending on the tie-break. Consequently, both the worst-case congestion and the approximation factor of the algorithm are $2-\epsilon$, which approach $2$ as $\epsilon$ shrinks small.}
    \label{fig:intro-online_lower_bound_congestion}
\end{figure}

The proof of the theorem is divided into two parts, the first showing the result for deterministic algorithms, and the second generalizing it for randomized algorithms. The~first part designs two sequences of flows that agree in their prefixes but disagree in their suffixes (see Figure~\ref{fig:intro-online_lower_bound_congestion}). The key property of the sequences is that in a routing with congestion $1$ the routing of their common prefix differs. Consequently, any deterministic algorithm that returns a routing with congestion $1$ for one sequence does not return it for the other. The second part designs $2^S$ supersequences consisting of $S$ independent sequences, each of which corresponds to one of the previous sequences, for some $S$ linear in the number of ToR switches. Therefore, from the key property of the sequences, any deterministic algorithm fails to return a routing with congestion $1$ for at least $2^S-1$ supersequences; thus, the expected congestion of an optimal deterministic algorithm when supersequences are chosen uniformly at random over the $2^S$ supersequences is at least $2-\nicefrac{1}{2^s}$, which approaches $2$ as $S$ grows large. Then, the theorem follows from the application of Yao's Minimax Principle~\cite{Yao_1977, Motwani_1995}. 


Since with unitary flow demand there is a routing with congestion at most $1$~\cite{Hwang_1983, Lovasz_2009}, the theorems imply a lower bound on the approximation factor of any online algorithm.

\begin{corollary}
    No online algorithm, deterministic or randomized, can approximate the congestion of a minimum congestion routing by a factor less than $2$.
\end{corollary}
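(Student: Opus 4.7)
The corollary follows almost immediately from Theorem~\ref{thm:intro-online} once paired with the classical result, cited earlier in the introduction, that any set of unit-demand flows in a Clos network admits a routing with congestion at most $1$~\cite{Hwang_1983, Lovasz_2009} (via the equivalence between link-disjoint Clos routings and edge-colorings of bounded-degree bipartite graphs). The plan is simply to combine these two ingredients.

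Concretely, fix any online algorithm $A$, deterministic or randomized, and any purported approximation factor $\alpha < 2$. First, I would invoke Theorem~\ref{thm:intro-online} to exhibit a sequence of unit-demand flows on which the (expected) congestion produced by $A$ is at least $2$; more carefully, the proof of that theorem actually yields congestion at least $2 - \nicefrac{1}{2^S}$ as the parameter $S$ in its supersequence construction grows, so one chooses $S$ large enough that $2 - \nicefrac{1}{2^S} > \alpha$. Second, on the same sequence the offline optimum congestion is at most $1$ by the Hwang--Lov\'asz result. Dividing the two bounds, $A$ attains an approximation ratio strictly greater than $\alpha$ on this instance, contradicting the assumption that $A$ is an $\alpha$-approximation.

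There is no real obstacle here: the entire technical content — the two adversarial sequences, the $2^S$ supersequence amplification, and the invocation of Yao's Minimax Principle to promote the deterministic lower bound to a randomized one — is already packaged inside Theorem~\ref{thm:intro-online}. The corollary itself is just a one-line division of the lower bound on online congestion by the unit upper bound on the offline minimum congestion under unit demands.
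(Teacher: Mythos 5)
Your proof is correct and matches the paper's approach: the paper derives this corollary by combining the congestion lower bound of Theorem~\ref{thm:intro-online} (in its quantitative $2 - \nicefrac{1}{2^S}$ form from Theorem~\ref{thm:online_lower_bound_congestion_random}) with the fact that unit-demand flows admit an offline routing of congestion at most $1$ (Theorem~\ref{thm:rearrangeability}). You also correctly handle the subtlety that $S$ must be taken large to push the lower bound arbitrarily close to $2$, exactly as the paper does in Corollary~\ref{cor:online_lower_bound_congestion}.
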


In contrast to these lower bounds, it is not hard to show that the \textit{Unsorted Greedy algorithm} gives an upper bound of $3$ on congestion and approximation.

\subsection{Roadmap}

\quad In \S\ref{sec:problem_setting}, we present a formal description of the minimum congestion routing problem in Clos networks. In \S\ref{sec:main_result}, we design and analyze the new routing algorithm for the minimum congestion routing problem. In \S\ref{sec:offline-lower-bounds}  and~\S\ref{sec:online-lower-bounds} we prove the lower bounds on congestion and approximation of a minimum congestion routing in the offline and online settings, respectively. In \S\ref{sec:related_work}, we review related work, and, in \S\ref{sec:discussion}, we discuss open questions.

\section{Problem Setting}
\label{sec:problem_setting}

\quad We formalize the minimizing congestion routing problem in Clos networks. In \S\ref{sec:problem_setting:network}, we describe Clos networks, and, in \S\ref{sec:problem_setting:congestion}, we define a minimum congestion~routing.

\subsection{Clos Networks}
\label{sec:problem_setting:network}

\quad A Clos network interconnects \textit{source servers} to \textit{destination servers} via three stages of switches: \textit{input ToR switches}, \textit{middle switches}, and \textit{output ToR switches}. We assume that the number of servers per ToR switch equals the number of middle switches (so that the Clos network has full bisection bandwidth). The Clos network of dimension $(N, R)$, for positive integers $N$ and $R$, is denoted by $C_{N, R}$ and was illustrated in Figure~\ref{fig:clos} for $N = 2$ and $R = 3$, and designates the directed graph with the following vertex and edge sets:

\begin{itemize}
    \item \textbf{Vertex set:} There are $N$ middle switches, with the $m$'th middle switch denoted by $M_m$, $m \in \left[ N \right]$. There are $R$ input switches, with the $i$'th input switch denoted by $I_i$, and $R$ output switches, with the $i$'th output switch denoted by $O_i$, $i \in \left[ R \right]$. There are $N \times R$ source servers, with the $k$'th source server of the $i$'th input switch denoted by $s^k_i$, and $N \times R$ destination servers, with the $k$'th destination server of the $i$'th output switch denoted by $t^k_i$, $i \in \left[ R \right]$ and $k \in \left[ N \right]$.
    \item \textbf{Edge set:} There is one edge $I_i M_m$ and one edge $M_m O_i$, $i \in \left[ R \right]$ and $m \in \left[ N \right]$. There is one edge $s^k_i I_i$ and one edge $O_i t^k_i$, $i \in \left[ R \right]$ and $k \in \left[ N \right]$.
\end{itemize}

In $C_{N, R}$, there are $N$ servers per ToR switch; the radices of ToR and middle switches are $N$ and $R$, respectively; and there are $N$ paths between every source-destination pair, each path via a different middle switch. We assume that all links have capacity $1$.

\subsection{Minimum Congestion Routing}
\label{sec:problem_setting:congestion}

\quad \enskip The input to the minimum congestion routing problem in a Clos network is: (1) A Clos network $C_{N, R}$. (2) A set $\mathcal{F}$ of flows. Each flow $f$ maps to a source-destination server pair, and a positive \textit{demand} $\text{dem}(f)$. If clear from the context, then a flow may be identified by its input-output switch pair. Given a flow $f \in \mathcal{F}$, let $i(f)$ and $s(f)$ be the input switch and the source server that $f$ leaves, respectively, and $j(f)$ and $t(f)$ be the output switch and the destination server that $f$ enters, respectively. The demands satisfy the following assumption: the total demand over all flows leaving a source and entering a destination is at most $1$ (so that if flows were splittable rather than unsplittable there were a routing with congestion $1$ for every set of flows):
\unskip\footnote{This corresponds to assuming that the demands are drawn from the \textit{hose-model} of a Clos network~\cite{Duffield_1999, Singla_2014, Namyar_2021}.}
\begin{align*}
    \displaystyle\sum\limits_{\substack{ f \in \mathcal{F}: i(f) = i \\ \hspace{23pt} s(f) = k}} \text{dem}(f) \le 1 \quad \text{ and }
    \displaystyle\sum\limits_{\substack{ f \in \mathcal{F}: j(f) = i \\ \hspace{23pt} t(f) = k}} \text{dem}(f) \le 1 \text{ for all } i \in [R], k \in [N].
\end{align*}

A solution to the problem is a \textit{routing} $r$ for the set $\mathcal{F}$ of flows, which is an assignment from each flow $f$ to a source-destination path $r(f)$; in fact, since there is a bijection between source-destination paths and middle switches, a routing is an assignment to a middle switch. An optimal solution to the problem minimizes over all routings the congestion of each routing; we call it a \textit{minimum congestion routing}. The congestion of a routing is the maximum congestion over all links between ToR and middle switches, where the congestion of a link is the total demand over all flows traversing the link. Letting $\mathcal{R}$ denote the set of all routings, the congestion of a minimum congestion routing, denoted by $OPT( \mathcal{F} )$, writes:
\begin{align*}
    OPT( \mathcal{F} ) \coloneq 
    \min_{r \in \mathcal{R}}
    \underbrace{
    \max_{\substack{ i \in [R] \\ m \in [N] }}
    \max \Big\{
    \overbrace{
    \displaystyle\sum\limits_{\substack{ f \in \mathcal{F}: i(f) = i \\ \hspace{26pt} r(f) = m}} \text{dem}(f)}^{\text{congestion of $I_i M_m$}},
    \overbrace{
    \displaystyle\sum\limits_{\substack{ f \in \mathcal{F}: j(f) = i \\ \hspace{26pt} r(f) = m}} \text{dem}(f) \Big\}}^{\text{congestion of $M_mO_j$}}}_{\text{congestion of $r$}}
\end{align*}
If clear from the context, then the argument $\mathcal{F}$ is omitted. The minimum congestion routing problem was illustrated in Figure~\ref{fig:data-center}.

While the assumption that the total demand leaving a source and entering a destination is at most $1$ limits the worst-case congestion of a minimum congestion routing, it implies no loss of generality concerning its approximation, in the sense that if there is a $p$-approximation algorithm for the problem when the demands satisfy the previous assumption, then there is also a $p$-approximation algorithm for the problem when the demands do not. (The easy proof is omitted.) \newline 


The next theorem recalls a basic property of Clos networks concerning the special case where there is at most one flow per source and per destination. We refer to this theorem and its corollary repeatedly throughput the paper.

\begin{theorem}[\cite{Hwang_1983, Lovasz_2009}]
    \label{thm:rearrangeability}
    Consider a Clos network $C_{N, R}$. For every set of flows such that the number of flows per source and per destination is at most one, there is a link-disjoint routing of the flows, and one such a routing can be found in polynomial-time.
\end{theorem}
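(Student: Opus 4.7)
The plan is to reduce the routing problem to edge-coloring a bipartite multigraph and then invoke König's theorem.

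First I would build an auxiliary bipartite multigraph $G$ whose two vertex classes are the input ToR switches $I_1, \dots, I_R$ and the output ToR switches $O_1, \dots, O_R$. For each flow $f \in \mathcal{F}$ I would add one edge between $i(f)$ and $j(f)$, remembering which flow that edge represents (parallel edges are allowed). The key structural observation is that because at most one flow leaves each source server and at most one enters each destination server, the degree of $I_i$ in $G$ is at most the number of source servers attached to $I_i$, which is $N$; symmetrically, $\deg_G(O_i) \le N$. Hence $G$ has maximum degree at most $N$.

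Next I would apply König's edge-coloring theorem for bipartite graphs, which guarantees a proper edge coloring of $G$ using at most $\Delta(G) \le N$ colors, together with the standard augmenting-path based algorithm that produces such a coloring in polynomial time. I would identify the $N$ color classes with the $N$ middle switches $M_1, \dots, M_N$, and define the routing $r$ by setting $r(f) = M_m$ whenever the edge of $G$ corresponding to $f$ receives color $m$.

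It remains to verify link-disjointness. For each input switch $I_i$ and each middle switch $M_m$, the flows $f$ with $i(f) = i$ and $r(f) = m$ correspond to edges incident to $I_i$ that all have color $m$; properness of the coloring forces there to be at most one such edge, so at most one flow traverses the link $I_i M_m$. The same argument on the output side shows at most one flow traverses each $M_m O_j$, and the source- and destination-side links carry at most one flow each by hypothesis. Thus the routing is link-disjoint.

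There is no real obstacle, as the theorem is essentially a translation of König's theorem; the only care needed is in checking the degree bound in $G$, which is where the ``at most one flow per source and per destination'' hypothesis is used, and in noting that any polynomial-time bipartite edge-coloring procedure (for instance, repeatedly extracting perfect matchings from a $\Delta$-regular supergraph, or the classical alternating-path recoloring algorithm) yields the polynomial-time claim.
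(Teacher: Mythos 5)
Your proof is correct and follows the same route the paper attributes to the cited works: model the flows as a bipartite multigraph on input/output ToR switches with degree at most $N$, decompose it into $N$ matchings (equivalently, a proper $N$-edge-coloring via K\H{o}nig's theorem), and let each matching determine the flows routed through one middle switch. This is precisely the bijection-and-matching-decomposition argument the paper sketches after Theorem~\ref{thm:rearrangeability}.
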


The proof of the theorem establishes a bijection between a set of flows in a Clos network with at most $N$ flows per ToR switch and a bipartite graph with vertex degree at most $N$, and decomposes the bipartite graph into $N$ matchings, each matching corresponding to the subset of flows assigned to a different middle switch.

\begin{corollary}
    \label{thm:rearrangeability_2}
    Consider a Clos network $C_{N, R}$. For every set of flows such that the number of flows per source and per destination is at most one, the congestion of a minimum congestion routing of the flows is less than or equal to $1$, and one such a routing can be found in polynomial time. 
\end{corollary}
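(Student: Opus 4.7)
The plan is to derive the corollary directly from Theorem~\ref{thm:rearrangeability} by observing that, under the hose-model demand assumption, a link-disjoint routing automatically has congestion at most $1$.

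First, I would invoke Theorem~\ref{thm:rearrangeability}: since the hypothesis states that there is at most one flow per source and per destination, the theorem produces, in polynomial time, a routing $r$ in which every link $I_i M_m$ and every link $M_m O_j$ is traversed by at most one flow of $\mathcal{F}$.

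Second, I would bound the congestion on an arbitrary internal link under this routing. For any link $I_i M_m$, at most one flow $f$ with $i(f) = i$ and $r(f) = m$ crosses it; by the hose-model assumption, the total demand leaving the source $s(f)$ is at most $1$, and since $f$ is the only flow leaving $s(f)$ (one flow per source), we have $\text{dem}(f) \le 1$. The symmetric argument applies to links $M_m O_j$ via the destination-side hose constraint. Hence every internal link has congestion at most $1$, so the congestion of $r$ is at most $1$.

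Finally, I would note that any routing trivially has nonnegative congestion, so $r$ achieves the best possible bound one could hope for under splittable flows as well; in particular, $OPT(\mathcal{F}) \le 1$ and $r$ witnesses it. There is no real obstacle here: the content is entirely carried by Theorem~\ref{thm:rearrangeability}, and the only thing to verify is that ``link-disjoint'' plus the per-source/per-destination demand bound upgrades to ``congestion at most $1$'', which is immediate.
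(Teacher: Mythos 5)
Your proposal is correct and matches the paper's intended derivation: the corollary follows directly from Theorem~\ref{thm:rearrangeability} once one observes that the hose-model constraint, combined with at most one flow per source and per destination, bounds each flow's demand by~$1$, so a link-disjoint routing has congestion at most~$1$.
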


\section{Offline: A New Routing Algorithm}
\label{sec:main_result}

\quad We introduce a new algorithm for minimum congestion routing in Clos networks. In \S\ref{sec:design_algorithm}, we design the algorithm, and, in \S\ref{sec:analysis_algorithm}, we analyze it to show that the algorithm approximates a minimum congestion routing by a factor at most $\nicefrac{9}{5}$. A straightforward generalization of the analysis shows that the algorithm also returns a routing with congestion at most $\nicefrac{9}{5}$.

\subsection{Designing the Algorithm}
\label{sec:design_algorithm}

\quad The new routing algorithm is presented as Algorithm~\ref{alg:new_routing_algorithm}. Given a Clos network $C_{N, R}$, and a set $\mathcal{F}$ of flows, the algorithm routes the flows in two phases. Let $\nicefrac{1}{Q}$ be a flow demand, $p$ be an approximation factor, $L$ be a lower bound on the minimum congestion $OPT$, all of which specified in the next section. In Phase 1, the algorithm finds a routing for a subset of the flows that includes all flows with demand greater than $\nicefrac{1}{Q} \times OPT$ via a matching-based procedure with congestion at most $p \times L$. In Phase 2, the algorithm finds a routing for the remaining flows via a greedy procedure without increasing congestion beyond $p \times OPT$. 

\begin{algorithm}
    \caption{Given a set $\mathcal{F}$ of flows in the Clos network $C_{N, R}$, the algorithm returns a routing $r$ for $\mathcal{F}$. Variable $c( I_i M_m )$ ($c( M_m O_j )$) maintains the congestion of link $I_i M_m$ ($M_m O_j$).}
    \label{alg:new_routing_algorithm}
    \begin{algorithmic}[1]
    
        \State $\mathcal{F}_1 \coloneq \textsc{UpholdProperties}( \mathcal{F} )$ \Comment{\textit{Phase 1:}}
        \State $r( \mathcal{F}_1 ) \coloneq \textsc{LinkDisjointRouting}( \mathcal{F}_1 )$
        
        \For{$i \in [ R ]$, $m \in [ N ]$} \Comment{\textit{Phase 2:}}
            \State $c( I_i M_m ) \coloneq \sum\limits_{\mathclap{\substack{f \in \mathcal{F}_1 : i_{(f)} = i, \\ \hspace{27pt} r_{(f)} = m}}} \text{dem}(f)$; \, $c( M_m O_i ) \coloneq \sum\limits_{\mathclap{\substack{f \in \mathcal{F}_1 : j_{(f)} = i,  \\ \hspace{27pt} r_{(f)} = m}}} \text{dem}(f)$ 
        \EndFor
        \State $\mathcal{F}_2 \coloneq \mathcal{F} \setminus \mathcal{F}_1$
        \For{$j = 1, 2, \, \cdots \,, | \mathcal{F}_2 | $}
             \State $r(f_{i_j}) \coloneq \argmin_{m \in [N] } \max \{ \, c( I_{i(f_{i_j})} M_m ), c \, ( M_m O_{j(f_{i_j})} ) \}$
             \State $c( I_{i(f_{i_j})} M_{r(f_{i_j})} ) \, \, + \coloneq \text{dem}(f_{i_j})$; \, $c( M_{r(f_{i_j})}  O_{j(f_{i_j})} ) \, \, + \coloneq \text{dem}(f_{i_j})$
        \EndFor
        \State \Return r
        
    \end{algorithmic}
\end{algorithm}

\subsubsection{Phase 1} 

\quad The first phase of the algorithm is divided into two sub-phases. Phase 1.a obtains a new instance of the problem: a new Clos network, and a subset of the flows that includes all flows with demand greater than $\nicefrac{1}{Q} \times OPT$ mapped to the new network. Phase 1.b routes these flows with congestion at most $P \coloneqq p \times L$ via a decomposition into matchings of the bipartite graph corresponding to the new Clos network.  \newline

\textbf{Phase 1.a.} The new Clos network is denoted by $C_{N, R \times K}$, with $K = \lceil \nicefrac{F}{N} \rceil$ and $F$ the maximum number of flows incident to a ToR switch in the original one. The new network is obtained from the original one by creating $K$ copies of each ToR switch. The $k$'th copy of the $i$'th input switch is denoted by $I^k_i$, and the $l$'th copy of the $j$'th output switch by~$O^l_j$.

The subset of the flows mapped to the new instance is denoted by $\mathcal{F}_1$. Consider that~flows in $\mathcal{F}$ are indexed in decreasing order of demands, $\mathcal{F} \coloneq \{f_1, f_2, \dots, f_{|\mathcal{F}|} \}$ with $i \le j$ implying $\text{dem}( f_i ) \ge \text{dem}( f_j )$. The subset $\mathcal{F}_1$ is obtained from $\mathcal{F}$ and mapped to $C_{N, R \times K}$ as follows:

\begin{enumerate}[label=(\textbf{\arabic*)}]
    \item The incidence of flows in ToR switches in the original network is respected in the new network, that is, if a flow in the original network leaves an input switch $I_i$ and enters an output switch $O_j$, then in the new network it leaves a copy of $I_i$ and enters a copy of $O_j$. 
    \item The assignment from flows to copies of input-output switch pairs in the new network is such that $\mathcal{F}_1$ is a maximal subset of $\mathcal{F}$ satisfying the properties P1 through P3; in particular, it is the maximal such subset returned by the procedure \textsc{UpholdProperties} (line 1 of Algorithm~\ref{alg:new_routing_algorithm}). The properties and the procedure are introduced next. 
\end{enumerate}

\textit{Properties.} Given a flow $f \in \mathcal{F}_1$, let $k(f)$ be the copy of its input switch that $f$ leaves, and let $l(f)$ be the copy of its output switch that $f$ enters. Denote by $^{+}N^k_i$ the number of flows in $\mathcal{F}_1$ leaving $I^k_i$, and by $^{+}D^k_i$ the maximum demand over all flows in $\mathcal{F}_1$ leaving $I^k_i$, 
\begin{align*}
    ^{+}N^k_i \coloneq | \{ f \in \mathcal{F}_1 \, | \, i(f) = i \text{ and } k(f) = k \} | \,  
    \text{ and } \,
    ^{+}D^k_i \coloneq \max_{ \substack{ f \in \mathcal{F}_1 : i(f) = i \\ \hspace{27pt} k(f) = k } } \text{dem}( f ).
\end{align*}
Likewise, denote by $^{-}N^k_i$ the number of flows in $\mathcal{F}_1$ entering $O^k_i$, and by $^{-}D^k_i$ the maximum demand over all flows in $\mathcal{F}_1$ entering $O^k_i$. 

\begin{enumerate}[label=(\textbf{P\arabic*)}]
    \item For all $i \in [ R ]$ and $k \in [K]$, $^{+} N^k_i \le N$ and $^{+} N^{k}_i > 0$ with $k > 1$ implies  $^{+} N^{k-1}_i = N$. Similarly, for all $i \in [ R ]$ and $k \in [K]$, $^{-} N^k_i \le N$ and $^{-} N^{k}_i > 0$ with $k > 1$ implies  $^{-} N^{k-1}_i = N$. In words, the number of flows per copy of a ToR switch is at most $N$, with flows incident to a common switch assigned to copies of that switch from the lowest to the highest copy, such that a flow is assigned to a higher copy if there are $N$ flows incident to each of the lower copies.
    \item For all $f_i, f_j \in \mathcal{F}_1$ such that $i(f_i) = i(f_j)$ and $i \le j$, $k(f_i) \le k(f_j)$. Similarly, for all $f_i, f_j \in \mathcal{F}_1$ such that $j(f_i) = j(f_j)$ and $i \le j$, $l(f_i) \le l(f_j)$. In words, the assignment of flows incident to a common ToR switch to copies of that switch is in non-increasing order of demands.
    \item For all $i \in [ R ]$, $^{+} N^{Q}_i > 0$ implies $\sum_{k \in [ K ] } {}^{+}D^k_i \le P$. Similarly, for all $i \in [ R ]$,  $^{-} N^{Q}_i > 0$ implies $\sum_{k \in [ K ] } {}^{-} D^l_k \le P$. In words, if there are $N$ flows incident to each of the lowest $Q-1$ copies of a ToR switch, then the total demand over the highest demand flows incident to each of the copies of that switch is at most $P$.
\end{enumerate}

The previous properties are illustrated in Figure~\ref{fig:new_routing_algorithm}. \newline 

\begin{figure}[t!]
    \centering
    \captionsetup{labelfont=bf}
    \begin{subfigure}{.49\textwidth}
        \centering
        \captionsetup{labelfont=bf}    
        \includegraphics[width=0.85\textwidth]{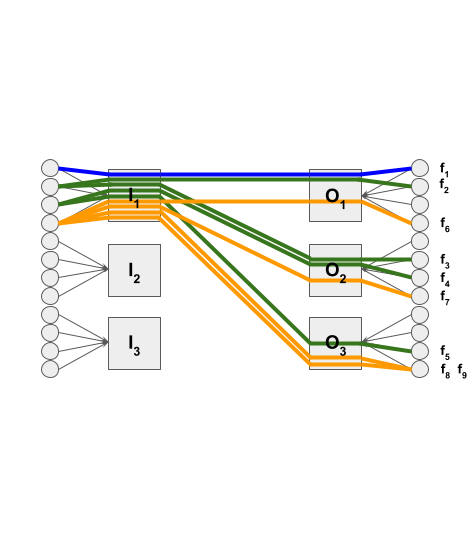} 
        \subcaption{A set of flows in the original network.}
    \end{subfigure}
    \begin{subfigure}{.49\textwidth}
        \centering
        \captionsetup{labelfont=bf}
        \includegraphics[width=0.85\textwidth]{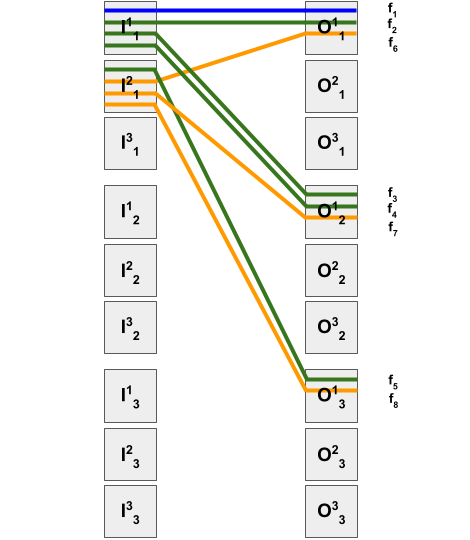} 
        \subcaption{A subset of the flows mapped to the new network.}
    \end{subfigure}

    \vspace{10pt}

    \begin{subfigure}{.49\textwidth}
        \centering
        \captionsetup{labelfont=bf}
        \includegraphics[width=0.85\textwidth]{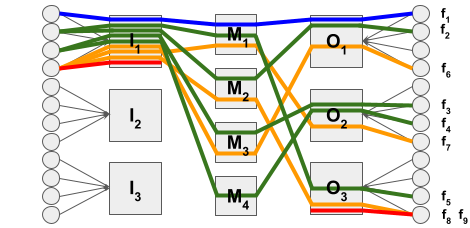} 
        \subcaption{Routing of the flows upon Phase 1.}
    \end{subfigure}
    \begin{subfigure}{.49\textwidth}
        \centering
        \captionsetup{labelfont=bf}    
        \includegraphics[width=0.85\textwidth]{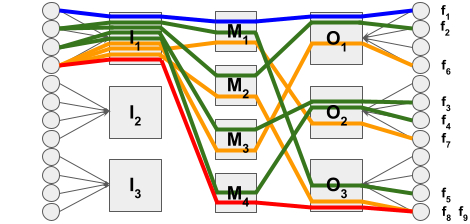} 
        \subcaption{Routing of the flows upon Phase 2.}
    \end{subfigure}
    
    \caption{Algorithm~\ref{alg:new_routing_algorithm} with $Q = 3$ and $P = \nicefrac{5}{3}$ for a set of flows in the Clos network $C_{4, 3}$. Flow $f_1$ has demand $1$ (in blue), flows $f_2$ through $f_5$ have demand $\nicefrac{1}{2}$ (in green), and flows $f_6$ through $f_9$ have demand $\nicefrac{1}{4}$ (in orange). While the subset of the flows in the new instance satisfy properties $P1$ through $P3$, the complete set does not: if $f_9$ (in red) is included in the subset, then it must be assigned to $(I^3_1, O^1_3)$, in which case, since $^{+} N^{3}_i > 0$, P3 implies $1 + \nicefrac{1}{2} + \nicefrac{1}{4} < \nicefrac{5}{3}$. In the end of Phase 1, the minimum congestion path from $I_1$ to $O_3$ is $M_4$, hence, in Phase 2, flow $f_9$ is assigned to $M_4$.}
    \label{fig:new_routing_algorithm}
\end{figure}

\textit{Procedure.} Given the set $\mathcal{F}$ of flows in $C_{N, R}$, the procedure $\textsc{UpholdProperties}( \mathcal{F} )$ returns the posited subset $\mathcal{F}_1$. Initially, the set $\mathcal{F}_1$ is empty. The procedure tests each flow in $\mathcal{F}$ for inclusion in $\mathcal{F}_1$ in decreasing order of demands. For each $i \in [ \, | \mathcal{F} | \, ]$, the procedure includes $f_i$ in $\mathcal{F}_1$ if there are copies $k$ and $l$ such that the assignment of $f_i$ to $(I^k_{i(f_i)}, O^l_{j(f_i)})$ satisfies properties P1 through P3. 

In detail, denote by $\mathcal{G} \subseteq \mathcal{F}_1$ the set of flows included in $\mathcal{F}_1$ prior to testing $f_i$. Let $x$ be the lowest copy of $I_{i(f_i)}$ such that the number of flows in $\mathcal{G}$ leaving that copy is strictly less than $N$, $x \coloneq \min \{ k \in [ K ] \, | \, ^{+}N^k_{i(f_i)} < N \}$. Likewise, let $y \coloneq \min \{ l \in [ K ] \, | \, ^{-}N^l_{j( f_i )} < N \}$. We say that $I_{i(f_i)}$ \textit{accepts} $f_i$ if it holds that: 
\begin{align*}
    x \ge Q \, \text{ implies } \sum_{k \in [ x - 1 ] } {}^{+} D^k_{i(f_i)} + \max \{ ^{+} D^{x}_{i(f_i)}, \text{dem}( f_i ) \} \le P; 
\end{align*}
and that $I_{i(f_i)}$ \textit{rejects} $f_i$ otherwise. Since flows are tested in decreasing order of demands, if $^{+} N^{x}_{i(f_i)} > 0$, then $I_{i(f_i)}$ accepts $f_i$. Likewise, we say that $O_{j(f_i)}$ \textit{accepts} $f_i$ if it holds that: 
\begin{align*}
    y \ge Q \, \text{ implies } \sum_{l \in [ y - 1 ] } {}^{-} D^l_{j(f_i)} + \max \{ ^{-} D^{y}_{j(f_i)}, \text{dem}( f_i ) \} \le P; 
\end{align*}
and that $O_{j(f_i)}$ \textit{rejects} $f_i$ otherwise. Flow $f_i$ is included in $\mathcal{F}_1$ if its accepted by both $I_{i(f_i)}$ and $O_{j(f_i)}$, in which case it is assigned to $(I^{x}_{i(f_i)}, O^{y}_{j(f_i)})$. \newline

\textbf{Phase 1.b.} The flows in $\mathcal{F}_1$ are routed in the original network according to a link-disjoint routing in the new network (line 2 of Algorithm~\ref{alg:new_routing_algorithm}). Since the number of flows incident to a copy of a ToR switch is at most the number $N$ of middle switches, there is one such routing in the new network, and it can be found in polynomial time (see Theorem~\ref{thm:rearrangeability}). 

The routing of $\mathcal{F}_1$ assigns the flows incident to a common copy of a ToR switch to different middle switches, with at most $K$ flows incident to that ToR switch assigned to the same middle switch. Therefore, the maximum congestion over all links incident to a ToR switch is at most the total demand over the highest demand flows incident to each copy of that switch. If there are $N$ flows incident to each of the lowest $Q-1$ copies of a ToR switch, then this congestion is at most $P$. In the analysis of the next section, $Q$ is set such that the total demand over the highest demand flows incident to the first $Q-1$ copies of a ToR switch is also at most $P$, so that the congestion of the routing of $\mathcal{F}_1$ is at most $P$.

\subsubsection{Phase 2} 

\quad The second phase of the algorithm routes each flow not routed in Phase 1. The subset of the flows not routed in Phase 1 is denoted by $\mathcal{F}_2 \coloneq \mathcal{F} \setminus \mathcal{F}_1$. Consider that the flows in $\mathcal{F}_2$ are indexed in decreasing order of demands, $\mathcal{F}_2 \coloneq \{ f_{i_1}, f_{i_2}, \dots f_{i_{|\mathcal{F}_2|}} \}$ with $i_j \in [ \, | \mathcal{F} | \, ]$ and $i_{j} \le i_{j+1}$ for all $j \in [ \, | \mathcal{F}_2 | - 1 ]$. Initially, the congestion of each link is the total demand over all flows in $\mathcal{F}_1$ routed through that link (line 4). The algorithm routes each flow in $\mathcal{F}_2$ in decreasing order of demands. For each  $j \in [ \, | \mathcal{F}_2 | \, ]$, the algorithm routes $f_{i_j}$ on a path whose congestion is minimum, breaking ties arbitrarily (line~8). Then, the algorithm updates the congestion of that path (line~9). The execution of Algorithm~\ref{alg:new_routing_algorithm} is illustrated in Figure~\ref{fig:new_routing_algorithm}.

\subsection{Analyzing the Algorithm}
\label{sec:analysis_algorithm}

\quad Define the lower bound $L$ on the optimal congestion $OPT$ as follows:
\begin{align*}
    L & \coloneq \max_{i \in [ R ]} \max \{ {}^{+} L_i, {}^{-} L_i \},
\end{align*}
where
\begin{align*}
        {}^{+} L_i & \coloneq \max \Big\{ 
            \max_{ f \in \mathcal{F} : i(f) = i } \text{dem}( f ), 
            \nicefrac{1}{N} \times \sum_{f \in \mathcal{F} : i(f) = i} \text{dem}( f ) \Big\} \text{ and } \\    
        {}^{-} L_i & \coloneq \max \Big\{ 
            \max_{ f \in \mathcal{F} : j(f) = i } \text{dem}( f ), 
            \nicefrac{1}{N} \times \sum_{f \in \mathcal{F} : j(f) = i} \text{dem}( f ) \Big\}. 
\end{align*}
In words, $OPT$ is at least the highest demand over all flows leaving $I_i$ (entering $O_i$), and at least the total demand over all flows leaving $I_i$ (entering $O_i$) averaged over the number of middle switches; if flows were splittable rather than unsplittable, then the latter term would correspond to the congestion of each link leaving $I_i$ (entering $O_i$) in the minimum congestion routing that divides the demand of each flow evenly over all source-destination paths.

The next theorem shows that, if $P$ is set to $\nicefrac{9}{5} \times L$ and $Q$ to $3$, then the algorithm approximates a minimum congestion routing by a factor at most $\nicefrac{9}{5}$. 

\begin{theorem}
    \label{thm:new_routing_algorithm}
    Fix $p \coloneq \nicefrac{9}{5}$ and $q \coloneq 3$. For every set $\mathcal{F}$ of flows, if Algorithm~\ref{alg:new_routing_algorithm} sets $P = p \times L( \mathcal{F} )$ and $Q = q$, then it returns a routing with congestion at most $p \times OPT( \mathcal{F} )$.
\end{theorem}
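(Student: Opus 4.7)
The plan is to bound the congestion of Algorithm~\ref{alg:new_routing_algorithm} by $p \cdot OPT = \nicefrac{9}{5} \cdot OPT$ in three steps: first show $L \le OPT$, then analyze Phase 1's routing of $\mathcal{F}_1$, and finally show Phase 2's sorted-greedy augmentation stays within the bound. For the first step, the largest single demand must traverse some link, so $OPT \ge \max_f \text{dem}(f)$; and the $N$ links leaving any input switch $I_i$ must collectively carry all flows leaving $I_i$, so some link carries at least the average, giving $OPT \ge \nicefrac{1}{N} \sum_{f : i(f) = i} \text{dem}(f)$. Symmetric bounds hold at output switches, so $L \le OPT$ and any bound of the form $c \cdot L$ automatically gives $c \cdot OPT$.

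For Phase 1, Theorem~\ref{thm:rearrangeability} applied to $C_{N, R \times K}$ (each copy holds at most $N$ flows by P1) yields a link-disjoint routing of $\mathcal{F}_1$; its image in $C_{N,R}$ places at most one flow per non-empty copy of $I_i$ on each middle switch, so congestion at link $I_i M_m$ is at most $\sum_k {}^{+}D^k_i$. I would split into two cases. If ${}^{+}N^Q_i > 0$, property P3 directly gives $\sum_k {}^{+}D^k_i \le P = \nicefrac{9}{5} L \le \nicefrac{9}{5} OPT$. If ${}^{+}N^Q_i = 0$, the sum reduces to ${}^{+}D^1_i + {}^{+}D^2_i$; if this is $\le \nicefrac{9}{5} L$ we are done, and otherwise both terms exceed $\nicefrac{4}{5} L$. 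Because copy $2$ is non-empty there are $N+1$ flows leaving $I_i$ with demand at least ${}^{+}D^2_i > \nicefrac{4}{5} L$; by pigeonhole any routing, including an optimal one, places two of them on a common middle switch, so $OPT \ge 2 \cdot \nicefrac{4}{5} L = \nicefrac{8}{5} L$, whence $\nicefrac{9}{5} OPT \ge \nicefrac{72}{25} L \ge 2 L \ge {}^{+}D^1_i + {}^{+}D^2_i$. Output-side links are handled symmetrically via ${}^{-}D^k_i$.

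For Phase 2, I would analyze the greedy placement of each $f \in \mathcal{F}_2$. Rejection at $I_{i(f)}$ means copies $1$ and $2$ of $I_{i(f)}$ were already full when $f$ was tested --- hence they hold $2N$ flows, each of demand at least $\text{dem}(f)$ since flows are processed in decreasing order --- and that ${}^{+}D^1_{i(f)} + {}^{+}D^2_{i(f)} + \text{dem}(f) > P$. The first fact, together with $\sum_{f : i(f) = i} \text{dem}(f) \le N L$, yields $\text{dem}(f) \le \nicefrac{L}{2}$. I would then show that the greedy choice $m^{*} = \argmin_m \max\{c(I_{i(f)} M_m), c(M_m O_{j(f)})\}$ satisfies $\max\{c(I_{i(f)} M_{m^{*}}), c(M_{m^{*}} O_{j(f)})\} + \text{dem}(f) \le \nicefrac{9}{5} OPT$, by combining the averaging bounds $\sum_m c(I_{i(f)} M_m) \le N L$ and $\sum_m c(M_m O_{j(f)}) \le N L$ with the matching structure of Phase~1's routing (which spreads the top demands across all $N$ middle switches) and the demand bound just derived. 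Rejection at $O_{j(f)}$ is symmetric, and rejection on both sides only gives more slack.

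The main obstacle is the last step: a purely averaging argument yields only $\min_m \max\{a_m, b_m\} \le 2 L$, which combined with $\text{dem}(f) \le \nicefrac{L}{2}$ overshoots the $\nicefrac{9}{5} OPT$ target when $OPT \approx L$. Closing this gap requires exploiting that Phase~1's link-disjoint routing distributes heavy flows matching-wise across all $N$ middle switches, ruling out the anti-correlation between input-side and output-side congestion profiles that drives adversarial greedy instances. The thresholds $P = \nicefrac{9}{5} L$ and $Q = 3$ are calibrated precisely so that Phase~1's worst case (two heavy flows ${}^{+}D^1, {}^{+}D^2$ each near $L$ loading a single link to near $2 L$) and Phase~2's worst case (a rejected flow of demand near $\nicefrac{L}{2}$ piled onto a link already loaded near $\nicefrac{9}{5} L$) meet the same $\nicefrac{9}{5}$ bound; formalizing this balance is the crux of the argument.
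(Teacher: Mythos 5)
Your Phase~1 analysis is essentially correct and reaches the same conclusion as Property~Q1 of Lemma~\ref{lem:new_routing_algorithm} (the paper's route is shorter---it bounds ${}^{+}D^k_i \le \tfrac{1}{k}\,OPT$ via Claim~\ref{claim:new_routing_algorithm_2} and sums $1 + \tfrac12 \le p$---but your two-case pigeonhole argument works). The real gap, which you flag yourself, is in Phase~2, and the missing idea is what the paper isolates as Property~Q3: if $I_{i(f)}$ rejected $f$, then after Phase~1 \emph{every one} of the $N$ links $I_{i(f)}M_m$ already carries load strictly exceeding $(p-1)L = \tfrac45 L$. This follows because rejection means $\sum_{k\in[x-1]}{}^{+}D^k_{i(f)} + \max\{{}^{+}D^x_{i(f)},\mathrm{dem}(f)\} > pL$; by P2 the minimum demand in copy $k-1$ dominates the maximum in copy $k$, so $\sum_{k\in[x-1]}{}^{+}d^k_{i(f)}$ is at least the above sum minus ${}^{+}D^1_{i(f)}\le L$, hence $> (p-1)L$; and copies $1,\dots,x-1$ are all full, so the link-disjoint routing places exactly one flow from each of them on every middle switch, giving the uniform lower bound. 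This is exactly the structural fact that kills the anti-correlated input/output profiles you correctly identified as the obstacle.

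You also need the sharper rejected-flow demand bound $\mathrm{dem}(f) \le \tfrac1q\,OPT = \tfrac13\,OPT$ (the paper's Q2, proved by pigeonhole via Claim~\ref{claim:new_routing_algorithm_1}); your $\mathrm{dem}(f) \le \tfrac{L}{2}$, which is only $\le \tfrac12\,OPT$, is too weak for the arithmetic to close. With Q2 and Q3 in hand, the paper argues by contradiction: set $D := p\cdot OPT - \mathrm{dem}(f) \ge (p - \tfrac1q)\,OPT$ and suppose every path $I_{i(f)}M_m O_{j(f)}$ would exceed $p\cdot OPT$ if $f$ were placed on it. Let $C$ be the number of input-side links at $I_{i(f)}$ with load $> D$; each of the other $N-C$ middle switches must then have its $O_{j(f)}$-side link loaded $> D$, and bounding the total demand at $O_{j(f)}$ by $N\cdot OPT$ forces $C > N\bigl(1 - \tfrac{1}{p - 1/q}\bigr)$. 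Then the total demand at $I_{i(f)}$ exceeds $C\cdot D + (N-C)(p-1)L$ (congested links contribute at least $D$, the rest at least $(p-1)L$ by Q3), and comparing against $\sum_h \mathrm{dem}(h) \le NL \le N\cdot OPT$ produces the coefficient $\tfrac{167}{165} > 1$ in front of that total demand---a contradiction. Without Q3 and the $\tfrac13\,OPT$ demand bound, the last step of your plan cannot be completed.
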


\textbf{Preliminaries to the proof of Theorem~\ref{thm:new_routing_algorithm}.} The proof of the theorem makes use of the additional properties of the set $\mathcal{F}_1$ of flows routed in Phase 1 detailed in the next lemma. Denote by ${}^{+} d^k_i$ the minimum demand over all flows in $\mathcal{F}_1$ leaving $I^k_i$, 
\begin{align*}
    ^{+}d^k_i \coloneq \min_{ \substack{ f \in \mathcal{F}_1 : i(f) = i \\ \hspace{23pt} k(f) = k } } \text{dem}( f ).
\end{align*}
Likewise, denote by ${}^{-} d^k_i$ the minimum demand over all flows in $\mathcal{F}_1$ entering $O^k_i$. 

\begin{lemma}
    \label{lem:new_routing_algorithm}
    The set $\mathcal{F}_1$ of flows satisfies the following additional properties:
    
        \begin{enumerate}[label=(\textbf{Q\arabic*)}]
            \item Assume that $p \ge \sum_{k \in [q - 1]} \nicefrac{1}{k}$. For all $i \in [ R ]$, $\sum_{k \in [ K ]} {}^{+} D^k_i \le p \times OPT$. Similarly, for all $j \in [ R ]$, $\sum_{l \in [ K ]} {}^{-} D^l_j \le p \times OPT$. In words, the congestion of the routing of $\mathcal{F}_1$ is at most $p \times OPT$.
            \item For all $f \in \mathcal{F}$, \, $\text{dem}( f ) > \nicefrac{1}{q} \times OPT$ \text{ implies } $f \in \mathcal{F}_1$. In words, every flow with demand greater than $\nicefrac{1}{q} \times OPT$ is routed in Phase 1.
            \item Assume that there is $f \in \mathcal{F}_2$ such that $I_{i(f)}$ rejected $f$ when it was tested in Phase 1. Then, $\sum_{k \in [ K ]} {}^{+} d^k_{i(f)} > (p-1) \times L$. Similarly, assume that there is $f \in \mathcal{F}_2$ such that $O_{j(f)}$ rejected $f$ when it was tested in Phase 1. Then, $\sum_{l \in [ K ]} {}^{-}  d^l_{j(f)} > (p-1) \times L$. In words, if there is a flow not routed in Phase 1, then, assuming that it was rejected by its input switch, every link incident to that switch has congestion greater than $(p-1) \times L$. 
        \end{enumerate}
\end{lemma}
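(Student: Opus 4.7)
The plan is to prove the three properties separately, each by combining the bookkeeping of P1--P3 with a pigeonhole argument against the existence of a routing of congestion $OPT$. Throughout I use that $L \le OPT$ (which follows from the definition of $L$ together with the fact that any routing of congestion $OPT$ has load at most $OPT$ on every link, hence total demand leaving $I_i$ is at most $N \cdot OPT$), and I write $d_k := {}^{+} D^k_i$ for the maximum demand on copy $k$ of $I_i$, the $O_j$ side being symmetric.

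For (Q1), I would case split on whether ${}^{+} N^Q_i > 0$. If yes, property P3 gives $\sum_k d_k \le P = p L \le p \cdot OPT$ directly. If no, then the non-empty copies of $I_i$ are $1, \ldots, K'$ for some $K' \le Q - 1$, and the aim is to show $d_k \le OPT/k$ for each $k \in [K']$. By P1 and P2, copies $1, \ldots, k-1$ each hold $N$ flows of demand at least $d_k$; adding the top flow of copy $k$ gives $(k-1) N + 1$ flows of demand at least $d_k$ leaving $I_i$. A minimum-congestion routing distributes them over the $N$ middle switches, so by pigeonhole some link $I_i M_m$ carries at least $k$ of them, whence $k \cdot d_k \le OPT$. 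Summing yields $\sum_{k=1}^{K'} d_k \le OPT \cdot \sum_{k=1}^{Q-1} 1/k \le p \cdot OPT$ by the harmonic assumption on $p$.

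For (Q2), I would argue by contradiction: suppose $\text{dem}(f) > OPT/q$ but $f \notin \mathcal{F}_1$; without loss of generality $I_{i(f)}$ rejected $f$. Rejection requires $x_f \ge Q = q$ at the moment $f$ was tested, so copies $1, \ldots, q - 1$ of $I_{i(f)}$ were already full; because flows are processed in non-increasing demand order, each of the $(q-1) N$ flows placed there has demand at least $\text{dem}(f) > OPT/q$. Together with $f$ itself, there are $(q-1) N + 1$ flows on $I_{i(f)}$ of demand strictly greater than $OPT/q$. Distributing them over $N$ middle switches in a minimum-congestion routing, pigeonhole forces some switch to receive at least $q$ of them, producing a link load strictly greater than $q \cdot (OPT/q) = OPT$, a contradiction.

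For (Q3), I would analyze the state at the instant $f$ is rejected by $I_{i(f)}$: let $d^{(f)}_k := {}^{+} D^{k, (f)}_{i(f)}$, let $x := x_f$, and let $a$ be the number of flows currently on copy $x$. The rejection condition reads $\sum_{k < x} d^{(f)}_k + \max(d^{(f)}_x, \text{dem}(f)) > pL$; subtracting the bound $d^{(f)}_1 \le L$ leaves an excess greater than $(p-1)L$. Property P2 gives ${}^{+} d^{k, (f)}_{i(f)} \ge d^{(f)}_{k+1}$ whenever copy $k+1$ is non-empty, and the fact that every flow placed on any copy of $I_{i(f)}$ before $f$ has demand at least $\text{dem}(f)$ handles the case $a = 0$, where the rejection max term equals $\text{dem}(f)$ rather than $d^{(f)}_x$. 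In either case $\sum_{k=1}^{x-1} {}^{+} d^{k, (f)}_{i(f)}$ dominates the excess and therefore exceeds $(p-1)L$. Finally, copies $1, \ldots, x-1$ are full at time $f$ and cannot receive any further flow in Phase 1.a, so ${}^{+} d^k_{i(f)}$ at the end of Phase 1.a equals ${}^{+} d^{k, (f)}_{i(f)}$ for $k < x$, and contributions from higher copies are non-negative; the claim follows. The main obstacle I expect is precisely this bookkeeping in (Q3): one must justify that the relevant copies are ``frozen'' once $f$ is tested and carefully handle the boundary case $a = 0$ where $d^{(f)}_x$ is undefined.
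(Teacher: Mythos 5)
Your proof is correct and takes essentially the same approach as the paper: the pigeonhole bound $D^k_i \le OPT/k$ (stated in the paper as Claims~\ref{claim:new_routing_algorithm_1} and~\ref{claim:new_routing_algorithm_2}, which you inline) gives Q1, the $(q-1)N+1$ pigeonhole gives Q2, and the rejection inequality minus $D^1 \le L$ combined with P2 gives Q3. Your treatment of Q3 is slightly more careful than the paper's, in that you explicitly track values at the moment $f$ is tested and justify that copies $1, \dots, x-1$ are frozen afterward, but the telescoping chain is the same.
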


The proof of the lemma requires two preparatory claims. Since the proofs of the statements in the lemma and in the preparatory claims are analogous for both input and output switches, they are given only for input switches.

\begin{claim}
    \label{claim:new_routing_algorithm_1}
    For all $i \in [ R ]$ and all positive integers $k$, the number of flows $f \in \mathcal{F}$ such that $i( f ) = i$ and $\text{dem}( f ) > \nicefrac{1}{k} \times OPT$ is at most $N \times (k - 1)$. Similarly, for all $j \in [ R ]$ and all positive integers $l$, the number of flows $f \in \mathcal{F}$ such that $j( f ) = j$ and $\text{dem}( f ) > \nicefrac{1}{l} \times OPT$ is at most $N \times (l - 1)$.
\end{claim}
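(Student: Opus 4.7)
\textbf{Proof plan for Claim~\ref{claim:new_routing_algorithm_1}.} The plan is to argue by examining a minimum congestion routing $r^\star$ of $\mathcal{F}$ (which attains congestion exactly $OPT$) and to bound, link by link, how many ``heavy'' flows can share an outgoing link of $I_i$. Since both statements in the claim are symmetric (input vs.\ output), I will present the argument for input switches only and leave the output case as the mirror image.

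Fix an input switch $I_i$ and a positive integer $k$. Call a flow $f$ with $i(f) = i$ \emph{heavy} if $\text{dem}(f) > \nicefrac{1}{k} \times OPT$, and let $H_i^k$ denote the set of heavy flows leaving $I_i$. In the optimal routing $r^\star$, every flow leaving $I_i$ is assigned to exactly one middle switch, so $H_i^k$ is partitioned according to which of the $N$ links $I_i M_1, \dots, I_i M_N$ the heavy flows traverse. The first step is to bound the size of each part: since the congestion of link $I_i M_m$ under $r^\star$ is at most $OPT$, the total demand of flows routed through $I_i M_m$ is at most $OPT$, and hence the number of heavy flows routed through $I_i M_m$ is strictly less than $\frac{OPT}{(1/k) \times OPT} = k$, i.e., at most $k-1$.

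Summing over the $N$ middle switches then yields $|H_i^k| \le N \times (k-1)$, which is the desired bound. The output-side statement follows by replacing $i(f) = i$ with $j(f) = j$ and $I_i M_m$ with $M_m O_j$ throughout. There is no real obstacle here: the only subtlety is to recall that $r^\star$ is an actual (unsplittable) routing, so the count of heavy flows on a given link is an integer, and a strict inequality on demand combined with the congestion bound $OPT$ forces at most $k-1$ heavy flows per link rather than $k$. This integer/strict-inequality rounding is what produces the factor $(k-1)$ rather than $k$, and it is exactly what will be needed later to couple the bound with the harmonic sum $\sum_{k \in [q-1]} \nicefrac{1}{k}$ appearing in property Q1 of Lemma~\ref{lem:new_routing_algorithm}.
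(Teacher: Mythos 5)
Your proof is correct and is essentially the contrapositive phrasing of the paper's argument: the paper assumes more than $N(k-1)$ heavy flows and derives a contradiction via pigeonhole, while you directly bound the number of heavy flows per link by $k-1$ and sum over the $N$ links. Both hinge on the same observation that a link with total demand at most $OPT$ can carry fewer than $k$ flows each of demand strictly greater than $OPT/k$.
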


\begin{proof}
    Assume, by contradiction, that the number of flows $f \in \mathcal{F}$ such that $i( f ) = i$ and $\text{dem}( f ) > \nicefrac{1}{k} \times OPT$ is at least $N \times (k - 1) + 1$. Then, for every routing of these flows, there is a middle switch to which at least $k$ flows are assigned. Therefore, since the demand of each of these $k$ flows is strictly greater than $\nicefrac{1}{k} \times OPT$, we conclude that the congestion a minimum congestion routing for $\mathcal{F}$ is strictly greater than $OPT$, thus arriving at a contradiction. 
\end{proof}

\begin{claim}
    \label{claim:new_routing_algorithm_2}
    For all $f \in \mathcal{F}_1$, $\text{dem}( f ) \le \nicefrac{1}{k(f)} \times OPT$. Similarly, for all $f \in \mathcal{F}_1$, $\text{dem}( f ) \le \nicefrac{1}{l(f)} \times OPT$. 
\end{claim}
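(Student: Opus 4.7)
The plan is to reduce the statement to Claim~\ref{claim:new_routing_algorithm_1} by counting flows leaving $I_{i(f)}$ that have demand at least $\text{dem}(f)$. Fix $f \in \mathcal{F}_1$, and set $k \coloneq k(f)$ and $i \coloneq i(f)$; I will only argue the input-switch case, since the output-switch case is symmetric.

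First, I would extract the following structural consequence of properties P1 and P2: each of the lower copies $I_i^1, \ldots, I_i^{k-1}$ contains exactly $N$ flows of $\mathcal{F}_1$, and every one of those flows has demand at least $\text{dem}(f)$. The ``fullness'' part is immediate from P1 applied inductively: $^{+}N^k_i > 0$ (witnessed by $f$ itself) forces $^{+}N^{k-1}_i = N$, which in turn forces $^{+}N^{k-2}_i = N$, and so on down to copy $1$. The demand comparison follows from P2: flows incident to $I_i$ are assigned to copies in non-decreasing order of their global index, and the global indexing is in non-increasing order of demand, so any $g \in \mathcal{F}_1$ with $i(g) = i$ and $k(g) < k$ has lower index than $f$ and hence $\text{dem}(g) \ge \text{dem}(f)$.

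Second, combining the two observations, the set $\{ g \in \mathcal{F} : i(g) = i, \; \text{dem}(g) \ge \text{dem}(f) \}$ contains the $N(k-1)$ flows populating copies $1, \ldots, k-1$ together with $f$ itself, yielding at least $N(k-1)+1$ such flows in $\mathcal{F}$. Now suppose for contradiction that $\text{dem}(f) > \nicefrac{1}{k} \times OPT$. Then every flow in this set satisfies $\text{dem}(g) > \nicefrac{1}{k} \times OPT$, so there are at least $N(k-1)+1$ flows in $\mathcal{F}$ leaving $I_i$ with demand strictly greater than $\nicefrac{1}{k} \times OPT$. This directly contradicts Claim~\ref{claim:new_routing_algorithm_1} (applied with the integer $k$), which caps the count of such flows at $N(k-1)$. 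Hence $\text{dem}(f) \le \nicefrac{1}{k(f)} \times OPT$, as required.

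I do not expect any genuine obstacle here; the argument is almost entirely bookkeeping. The only step that deserves care is the structural observation above, namely showing that $k(f)=k$ really does force the $k-1$ preceding copies to be saturated with higher-demand flows. Both halves of that observation are mechanical readings of P1 and P2, so the proof amounts to packaging them together and invoking the pigeonhole statement Claim~\ref{claim:new_routing_algorithm_1}.
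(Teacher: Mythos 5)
Your proof is correct and follows essentially the same route as the paper's: invoke P1 to show that copies $1,\dots,k(f)-1$ of $I_{i(f)}$ are each saturated with $N$ flows, invoke P2 to show those flows all have demand at least $\text{dem}(f)$, and then contradict Claim~\ref{claim:new_routing_algorithm_1} by counting $N(k(f)-1)+1$ flows of demand exceeding $\nicefrac{1}{k(f)} \times OPT$. The only difference is presentational — you state the structural observation up front and then pass to the contradiction, whereas the paper opens directly with the contradiction hypothesis — but the substance is identical.
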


\begin{proof}
    Assume, by contradiction, that there is $f \in \mathcal{F}_1$ such that $\text{dem}( f ) > \nicefrac{1}{k(f)} \times OPT$. From P1, we know that $^{+} N^k_i = N$ for all $k \in [ k(f) - 1 ]$, and, from P2, that, for all flow $g \in \mathcal{F}_1$ such that $i(g) = i(f)$ and $k( g ) < k( f )$, $\text{dem}( f ) > \nicefrac{1}{k(f)} \times OPT$. Therefore, we conclude that the number of flows $g \in \mathcal{F}$ such that $i( g ) = i( f )$ and $\text{dem}( g ) > \nicefrac{1}{k(f)} \times OPT$ is at least $N \times (k(f) - 1) + 1$. However, from Claim~\ref{claim:new_routing_algorithm_1}, we know that this number is at most $N \times (k(f) - 1)$, thus arriving at a contradiction.  
\end{proof}

\begin{proof}[\textbf{Proof of Lemma~\ref{lem:new_routing_algorithm}}]
    We show that each of the properties Q1 through Q3 is satisfied. \newline
    
    \textit{Property Q1:} From P3, we know that ${}^{+} N^q_i = 0$ or $\sum_{k \in [ K ]} {}^+ D^k_i \le p \times L$. If ${}^+ N^q_i = 0$, then, from Claim~\ref{claim:new_routing_algorithm_2}, we write:
    \begin{align*}
        \sum_{k \in [ K ]} {}^+ D^k_i & =   \sum_{k \in [ q - 1 ]} {}^+ D^k_i \\
                                      & \le \sum_{k \in [ q - 1 ]} \nicefrac{1}{k} \times OPT && (\text{from Claim}~\ref{claim:new_routing_algorithm_2})\\
                                      & \le p \times OPT && (\text{from the assumption on $p$}).
    \end{align*}
    If ${}^+ N^q_i \neq 0$, then we write:
    \begin{align*}
        \sum_{k \in [ K ]} {}^+ D^k_i & \le p \times L \\
                                      & \le p \times OPT.
    \end{align*}
    In both cases, we conclude that $\sum_{k \in [ K ]} {}^{+} D^k_i \le p \times OPT$. \newline

    \textit{Property Q2:} Assume, by contradiction, that $\text{dem}( f ) > \nicefrac{1}{q} \times OPT$ but $f \in \mathcal{F}_2$. Denote by $\mathcal{G} \subseteq \mathcal{F}_1$ the set of flows included in $\mathcal{F}_1$ prior to testing $f$. From P1 and P3, ${}^{+} N^k_{i(f)} = N$ for all $k \in [ q - 1 ]$, or ${}^{-}  N^l_{j(f)} = N$ for all $l \in [ q - 1 ]$. Assume, without loss of generality, that ${}^{+} N^k_{i(f)} = N$ for all $k \in [ q - 1 ]$. Since flows are tested in non-increasing order of demands, we conclude that the number of flows $g \in \mathcal{G}$ such that $i( g ) = i( f )$ and $\text{dem}( g ) > \nicefrac{1}{q} \times OPT$ is at least $N \times (q - 1) + 1$. However, from Claim~\ref{claim:new_routing_algorithm_1}, we know that this number is at most $N \times (q - 1)$, thus arriving at a contradiction. \newline
    
    \textit{Property Q3:} Denote by $\mathcal{G} \subseteq \mathcal{F}_1$ be the set of flows included in $\mathcal{F}_1$ prior to testing $f$, and by $x$ the lowest copy of $I_{i(f)}$ such that the number of flows in $\mathcal{G}$ leaving that copy is strictly less than $N$. From P2 and P3, we write:
    
    \begin{align*}
         \sum_{k \in [ K ]} {}^{+} d^k_{i(f)}
                                       & \ge \sum_{k \in [ x-2 ]} {}^{+} d^{k}_{i(f)} + {}^{+} d^{x-1}_{{i(f)}} \\
                                       & \ge \sum_{k \in [ 2, x-1 ]} {}^{+} D^{k}_{i(f)} + \max \{ {}^{+} D^{ x }_{i(f)}, \text{dem}( f ) \} && (\text{from P2}) \\
                                       & = \underbrace{\sum_{k \in [ x-1 ]} {}^{+} D^{k}_{i(f)} + \max \{ {}^{+} D^{ x}_{i(f)}, \text{dem}( f ) \}}_{> \, p \times L, \text{ since $I_{i(f)}$ rejected $f$}} - {}^{+} D^{1}_{i(f)} \\ 
                                       & > (p-1) \times L, && (\text{from P3})
    \end{align*}
    to conclude that $\sum_{k \in [ K ]} {}^{+} d^k_{i(f)} > (p-1) \times L$.
\end{proof}

\begin{proof}[\textbf{Proof of Theorem~\ref{thm:new_routing_algorithm}}]
    The proof of the theorem is by contradiction. By choice of $p$ and $q$, $p \ge \sum_{ k \in [q-1] } \nicefrac{1}{k}$. Consequently, from Q1, we know that the congestion of the routing for $\mathcal{F}_1$ is at most $p \times OPT$. Consider the iteration in Phase 2 when a flow $f \in \mathcal{F}_2$ is routed. Assume, without loss of generality, that $I_{i(f)}$ rejected $f$ when it was tested in Phase 1. Denote by $\mathcal{H}$ the set of flows routed in Phase 1 or in previous iterations in Phase 2. We say that a link $I_i M_m$ ($M_m O_j$) is \textit{congested} if the total demand over all flows in $\mathcal{H}$ routed through $I_i M_m$ ($M_m O_j$) is greater than $D \coloneq p \times OPT - \text{dem}( f )$,
    \begin{align*}
        \sum_{ \substack{ h \in H : i(h) = i(f), \\ \text{ and } r(h) = r(f)} } \text{dem}( h ) > D 
        \hspace{15pt}
        \big(  \sum_{ \substack{ h \in H : i(h) = i(f), \\ \text{ and } r(h) = r(f)} } \text{dem}( h ) > D \, \big),
    \end{align*}
   and that a path $I_i M_m O_j$ is congested if both $I_i M_m$ and $M_m O_j$ are congested. Assume, by contradiction, that $I_{i(f)} M_m O_{j(f)}$ is congested for all $m \in [ N ]$. 
   
   The derivation of the contradiction makes use of the next two claims. Let $D$ be the congestion of a congested link, and let $C$ be the number of congested links leaving $I_{i(f)}$. 

    \vspace{3pt}

    \textit{Claim (i): $D$ satisfies $D \ge (p - \nicefrac{1}{q}) \times OPT$}. From Q2, we write 
    \begin{align*}
        D & \triangleq p \times OPT - \text{dem}( f )                                     \\
                 & \ge p \times OPT - \nicefrac{1}{q} \times OPT                           \\
                 & =  (p - \nicefrac{1}{q}) \times OPT.  \hspace{278pt} \vartriangleleft
    \end{align*}

    \textit{Claim (ii): $C$ satisfies $C > N \times (1 - \frac{1}{p-\nicefrac{1}{q}})$}. From the assumption that $I_{i(f)} M_m O_{j(f)}$ is congested for all $m \in [ N ]$, we know that if $I_{i(f)} M_m$ is not congested, then $M_m O_{i(f)}$ is congested, and we deduce that
    \begin{align*}
        \sum_{h \in \mathcal{H} : j(h) = j(f)} \text{dem}( h ) > (N - C) \times D \Leftrightarrow 
        C > N - \frac{1}{D} \sum_{h \in \mathcal{H} : j(h) = j(f)} \text{dem}( h ).
    \end{align*}
    Then, from Claim (i), we write
    \begin{align*}
        C & >   N - \frac{1}{D} \times \sum_{h \in \mathcal{H} : j(h) = j(f)} \text{dem}( h ) \\ 
          & \ge N - \frac{1}{( p - \nicefrac{1}{q} ) \times OPT} \times  \sum_{h \in \mathcal{H} : j(h) = j(f)} \text{dem}( h ) \\ 
          & \ge N \times (1 - \frac{1}{p-\nicefrac{1}{q}}). \hspace{273pt} \vartriangleleft
    \end{align*}

    The contradiction follows from the forthcoming lower bound on the total demand over all flows in $\mathcal{H}$ leaving $I_{i(f)}$. From Q3 and the previous claims, we write
    \begin{align*}
        \sum_{h \in \mathcal{H} : i(h) = i(f)} \text{dem}( h ) & > C \times D + (N - C) \times (p-1) \times L \\
        & = N \times L \times (p - 1) + C \times (D - (p - 1 ) \times L )    \\
        & \ge N \times L \times (p - 1) + C \times (D - (p - 1 ) \times OPT )  \\
        & > N \times L \times (p - 1) + N \times OPT \times (1 - \frac{1}{p-\nicefrac{1}{q}})(1 - \nicefrac{1}{q}) \\
        & \ge \sum_{h \in \mathcal{H} : i(h) = i(f)} \text{dem}( h ) \times (p - 1 + (1 - \frac{1}{p-\nicefrac{1}{q}})(1 - \nicefrac{1}{q}) ).
    \end{align*}
    By setting $p \triangleq \nicefrac{9}{5}$ and $q \triangleq \nicefrac{1}{3}$, we write
    \begin{align*}
        p - 1 + \Big(1 - \frac{1}{p-\nicefrac{1}{q}} \Big) \Big(1 - \nicefrac{1}{q}\Big) = \nicefrac{167}{165}
    \end{align*}
    to conclude that
    \begin{align*}
        \sum_{h \in \mathcal{H} : i(h) = i(f)} \text{dem}( h ) > \sum_{h \in \mathcal{H} : i(h) = i(f)} \text{dem}( h ). \hspace{274pt} \qedhere
    \end{align*}
\end{proof}

The proof of the previous theorem can be generalized to show that, if $P$ and $Q$ are set as before, then not only does the algorithm approximate the minimum congestion by a factor at most $\nicefrac{9}{5}$, but it also returns a routing with congestion at most $\nicefrac{9}{5}$. This generalization only involves introducing the preparatory Claim~\ref{claim:new_routing_algorithm_3} analogous to Claim~\ref{claim:new_routing_algorithm_1}, and replacing the term $OPT$ with the term $\min \{ OPT, 1 \}$ everywhere in the analysis. 

\begin{claim}
    \label{claim:new_routing_algorithm_3}
    For all $i \in [ R ]$ and all positive integers $k$, the number of flows $f \in \mathcal{F}$ such that $i( f ) = i$ and $\text{dem}( f ) > \nicefrac{1}{k}$ is at most $N \times (k - 1)$. Similarly, for all $j \in [ R ]$ and all positive integers $l$, the number of flows $f \in \mathcal{F}$ such that $j( f ) = j$ and $\text{dem}( f ) > \nicefrac{1}{l}$ is at most $N \times (l - 1)$.
\end{claim}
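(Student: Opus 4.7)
The plan is to mimic the shape of Claim~\ref{claim:new_routing_algorithm_1} but replace the optimality-based argument with the hose-model assumption from \S\ref{sec:problem_setting:congestion}. Concretely, fix an input switch $I_i$ and a positive integer $k$. Every flow $f$ with $i(f)=i$ leaves exactly one of the $N$ source servers $s^1_i,\ldots,s^N_i$ attached to $I_i$, so it suffices to bound, for each source server, the number of flows leaving it with demand strictly greater than $\nicefrac{1}{k}$, and then sum over the $N$ servers.

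For a single source server $s^h_i$, the problem setting guarantees
\[
    \sum_{\,f \in \mathcal{F} :\, i(f) = i,\, s(f) = h} \text{dem}(f) \le 1.
\]
If at least $k$ of those flows had demand strictly greater than $\nicefrac{1}{k}$, their contribution to the sum above would already exceed $k \cdot \nicefrac{1}{k} = 1$, a contradiction. Hence the number of flows leaving $s^h_i$ with demand strictly greater than $\nicefrac{1}{k}$ is at most $k - 1$. Summing across the $N$ source servers attached to $I_i$ yields the desired bound of $N \times (k-1)$ on the number of flows $f$ with $i(f)=i$ and $\text{dem}(f) > \nicefrac{1}{k}$. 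The statement for output switches is entirely symmetric, using the hose-model constraint on destination servers $t^h_j$.

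The reason the proof diverges from Claim~\ref{claim:new_routing_algorithm_1} is structural: in Claim~\ref{claim:new_routing_algorithm_1} the bound involves $OPT$, so pigeonholing $k$ heavy flows onto a common middle switch immediately contradicts optimality; here the bound is an absolute $\nicefrac{1}{k}$ with no reference to $OPT$, and no analogous pigeonhole-on-routings argument is available. I expect the only genuinely substantive step is the observation that the right unit to which to apply the demand-sum inequality is the source server (with its capacity-$1$ outgoing link), not the ToR switch; once that is identified, the rest is an elementary counting argument. With Claim~\ref{claim:new_routing_algorithm_3} in hand, the generalization of Theorem~\ref{thm:new_routing_algorithm} proceeds, as the authors indicate, by substituting $\min\{OPT,1\}$ for $OPT$ throughout and invoking Claim~\ref{claim:new_routing_algorithm_3} wherever Claim~\ref{claim:new_routing_algorithm_1} was used, since $\min\{OPT,1\} \le 1$ makes the inequality $\text{dem}(f) > \nicefrac{1}{k} \times \min\{OPT,1\}$ weaker than $\text{dem}(f) > \nicefrac{1}{k} \times OPT$ only in the regime where the new claim, not the old one, supplies the counting bound.
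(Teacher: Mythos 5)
Your proof is correct and takes essentially the same approach as the paper's: both invoke the hose-model assumption that the total demand leaving each of the $N$ source servers is at most $1$, observe that this caps the number of flows with demand $>\nicefrac{1}{k}$ per source at $k-1$, and sum across the $N$ servers of $I_i$. Your proposal merely spells out the pigeonhole step that the paper leaves implicit, and correctly identifies that the per-server (not per-switch) constraint is the right unit of analysis.
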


\begin{proof}
    From the assumption that the total demand over all flows leaving a source is at most $1$, we know that, for all $s \in [ N ]$, the number of flows $f \in \mathcal{F}$ such that $i(f) = f$, $s(f) = s$, and $\text{dem}( f ) > \nicefrac{1}{k}$ is at most $k-1$. Therefore, the number of flows $f \in \mathcal{F}$ such that $i(f) = f$ and $\text{dem}( f ) > \nicefrac{1}{k}$ is at most $N\times (k-1)$.
\end{proof}

\begin{theorem}
    Fix $p \coloneq \nicefrac{9}{5}$ and $q \coloneq 3$. For every set $\mathcal{F}$ of flows, if Algorithm~\ref{alg:new_routing_algorithm} sets $P = p \times L$ and $Q = q$, then it returns a routing with congestion at most $p \times \min \{ OPT( \mathcal{F} ), 1 \}$.
\end{theorem}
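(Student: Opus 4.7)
The plan is to repeat the proof of Theorem~\ref{thm:new_routing_algorithm} almost verbatim, substituting $\min\{OPT, 1\}$ for $OPT$ throughout. Because $\min\{OPT, 1\} \le OPT$, the case $OPT \le 1$ already follows from the earlier theorem, so the genuinely new content is the case $OPT > 1$, where we must show the algorithm returns a routing with congestion at most $p = \nicefrac{9}{5}$ independently of $OPT$. The key enabler is Claim~\ref{claim:new_routing_algorithm_3}, which mirrors Claim~\ref{claim:new_routing_algorithm_1} but uses the hose-model assumption (total demand per source at most $1$) to bound by $N \times (k-1)$ the number of flows incident to an input switch with demand exceeding $\nicefrac{1}{k}$ (rather than $\nicefrac{1}{k} \times OPT$).

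First I would combine Claim~\ref{claim:new_routing_algorithm_1} and Claim~\ref{claim:new_routing_algorithm_3} into the unified bound that at most $N \times (k-1)$ flows incident to a given ToR switch have demand strictly greater than $\nicefrac{1}{k} \times \min\{OPT, 1\}$. This immediately upgrades Claim~\ref{claim:new_routing_algorithm_2} to $\text{dem}(f) \le \nicefrac{1}{k(f)} \times \min\{OPT, 1\}$ for every $f \in \mathcal{F}_1$, with identical derivation. Propagating the substitution into the proof of Lemma~\ref{lem:new_routing_algorithm}, properties Q1 and Q2 strengthen to use $\min\{OPT, 1\}$ in place of $OPT$; property Q3 is stated purely in terms of $L$ and needs no modification.

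Second I would rerun the main contradiction argument of Theorem~\ref{thm:new_routing_algorithm} with $\min\{OPT, 1\}$ replacing $OPT$. Claim (i) of that proof becomes $D \ge (p - \nicefrac{1}{q}) \times \min\{OPT, 1\}$ by the upgraded Q2, and Claim (ii) becomes $C > N \times (1 - \nicefrac{1}{p - \nicefrac{1}{q}})$ by the same algebra driven now by the upgraded Claim (i). The one step that deserves scrutiny is the inequality $(p-1) \times L \le (p-1) \times OPT$ used to transition from Q3 into the final chain: in the new proof this must become $(p-1) \times L \le (p-1) \times \min\{OPT, 1\}$, which requires $L \le \min\{OPT, 1\}$. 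We already know $L \le OPT$, so what needs verification is $L \le 1$; this follows from the hose-model, since every individual flow has demand at most $1$ and the total demand leaving any input switch $I_i$ is at most $N$, so its average over the $N$ middle switches is at most $1$ as well.

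The main obstacle is essentially bookkeeping: verifying that no step of the original contradiction chain secretly required the strict inequality $L < OPT$ or the assumption $OPT \ge 1$, and confirming that each appearance of $OPT$ can indeed be swapped for the smaller quantity $\min\{OPT, 1\}$ in the same direction. Once this is checked, the arithmetic identity that closes the original proof at $p = \nicefrac{9}{5}$ and $q = 3$, namely $p - 1 + (1 - \tfrac{1}{p - \nicefrac{1}{q}})(1 - \nicefrac{1}{q}) = \nicefrac{167}{165} > 1$, is unchanged and delivers the same contradiction, yielding congestion at most $p \times \min\{OPT(\mathcal{F}), 1\}$.
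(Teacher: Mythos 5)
Your proposal is correct and takes essentially the same approach as the paper: the paper's own remark says the generalization follows by introducing Claim~\ref{claim:new_routing_algorithm_3} and replacing $OPT$ with $\min\{OPT, 1\}$ throughout the analysis, which is exactly what you do, and you usefully make explicit the one check the paper leaves implicit, namely that $L \le \min\{OPT, 1\}$ (which indeed follows from the hose-model bounds giving per-flow demand at most $1$ and total demand per ToR switch at most $N$).
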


\section{Offline: Lower Bounds on Congestion and Approximation}
\label{sec:offline-lower-bounds}

\quad We present lower bounds on worst-case congestion and approximation of minimum congestion routings in Clos networks by polynomial-time algorithms. In \S\ref{sec:limits_congestion}, we show that there are sets of flows for which the congestion of a minimum congestion routing is at least $\nicefrac{3}{2}$, and in \S\ref{sec:limits_approximation}, we show that it is impossible for any polynomial-time algorithm to approximate a minimum congestion routing by a factor less than $\nicefrac{3}{2}$ unless $P = NP$. \newline

\textbf{The cross gadget.} The building block of the constructions yielding the posited lower bounds is the following set of flows. The \textit{cross gadget} of size $N$, for some integer $N \ge 2$, illustrated in Figure~\ref{fig:gadget_macro} for $N = 3$, corresponds to the set of flows in the Clos network $C_{N, N}$ where for each of the $N$ input switches there are $N-1$ flows with demand $1$ leaving that input switch and entering each of the first $N-1$ output switches:

\begin{itemize}
    \item There is one flow $(s^{j}_i, t^{i}_j)$ with demand $1$, for all $i \in [ N ]$ and $j \in [ N-1 ]$.
\end{itemize}

The key property of the cross gadget, detailed in the next lemma, is that there is a unique routing with congestion $1$ (modulus the numbering of the middle switches), which assigns the flows incident to a common ToR switch to different middle switches, and does not assign a flow to a different middle switch at each input switch. 

\begin{lemma}
    \label{lem:gadget}
    Consider the cross gadget of size $N$. For every routing with congestion $1$, the following properties hold:
    \begin{enumerate}[label=(\textbf{\arabic*)}]
        \item For all $i \in [ N ]$, the $N-1$ flows leaving $I_i$ are assigned to $N-1$ different middle switches, and, for all $j \in [ N-1 ]$, the $N$ flows entering $O_j$ are assigned to $N$ different middle switches.
        \item For all $i_1, i_2 \in [ N ]$ such that $i_1 \neq i_2$, the middle switch to which no flow leaving $I_{i_1}$ is assigned differs from the middle switch to which no flow leaving $I_{i_2}$ is assigned.         
    \end{enumerate}
\end{lemma}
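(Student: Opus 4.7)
The plan is to prove the two properties in order, leveraging the fact that all flows have unit demand and all links have unit capacity, so congestion at most $1$ forces that no two flows share any link.

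For property (1), I would first fix an input switch $I_i$ and count: there are exactly $N-1$ flows leaving $I_i$, and they must traverse the $N$ links $I_i M_1, \ldots, I_i M_N$. Unit demand plus unit capacity forces each flow onto a distinct link, i.e.\ onto a distinct middle switch, and exactly one middle switch receives no flow from $I_i$. A symmetric count applies to each output switch $O_j$ with $j \in [N-1]$: there are exactly $N$ flows entering $O_j$ (one from each input switch), they must traverse the $N$ links $M_1 O_j, \ldots, M_N O_j$, and again no two may share a link, so all $N$ middle switches are used.

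For property (2), I would argue by contradiction. Suppose there exist distinct $i_1, i_2 \in [N]$ whose missing middle switches coincide; call this common middle switch $M_m$. By property (1) applied on the output side, for every $j \in [N-1]$ exactly one of the $N$ flows entering $O_j$ is routed through $M_m$. So $M_m$ carries exactly $N-1$ flows in total, one aimed at each output switch $O_1, \ldots, O_{N-1}$. Now apply property (1) on the input side: the links $I_1 M_m, \ldots, I_N M_m$ each carry at most one flow, so these $N-1$ flows must originate at $N-1$ \emph{distinct} input switches. However, by assumption neither $I_{i_1}$ nor $I_{i_2}$ sends any flow through $M_m$, so the $N-1$ flows come from a subset of $\{I_i : i \in [N], i \neq i_1, i \neq i_2\}$, which has only $N-2$ elements. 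This is a contradiction.

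I expect property (1) to be essentially a direct consequence of the capacity constraint and is routine; the main content is property (2), and the decisive step there is the double-count that simultaneously uses property (1) on both the output side (to pin down that $M_m$ carries exactly $N-1$ flows) and the input side (to force those $N-1$ flows to originate from pairwise distinct input switches). Once both constraints are in hand, the pigeonhole between $N-1$ required sources and the $N-2$ available input switches delivers the contradiction immediately.
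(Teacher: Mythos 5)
Your proof is correct and takes essentially the same approach as the paper: property (1) follows from the unit-demand and unit-capacity constraints, and property (2) is obtained by double-counting the flows through a fixed middle switch $M_m$, using the output-side count to show $M_m$ carries exactly $N-1$ flows and the input-side count to show they originate at distinct input switches, yielding the pigeonhole contradiction.
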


\begin{proof}
    The first property follows from the fact that all flows have demand $1$. We show that~the second property holds. First, since the $N$ flows entering each of the first $N{-}1$ output switches~are assigned to $N$ different middle switches, we deduce that each middle switch is assigned $N{-}1$ flows. Second, since the $N{-}1$ flows leaving each of the $N$ input switches are assigned to $N{-}1$ different middle switches, we further deduce that for each middle switch the $N{-}1$ flows assigned to it leave $N{-}1$ different input switches. Therefore, if there are two input switches $I_{i_1}$ and $I_{i_2}$, $i_1 \neq i_2$, and a middle switch $M_{m}$ such that no flow leaving $I_{i_1}$ is assigned to $M_m$ and no flow leaving $I_{i_2}$ is assigned to $M_m$, then there are at most $N-2$ flows assigned to $M_m$, which contradicts the existence of $N-1$ flows assigned to $M_m$.
\end{proof}

\begin{figure}
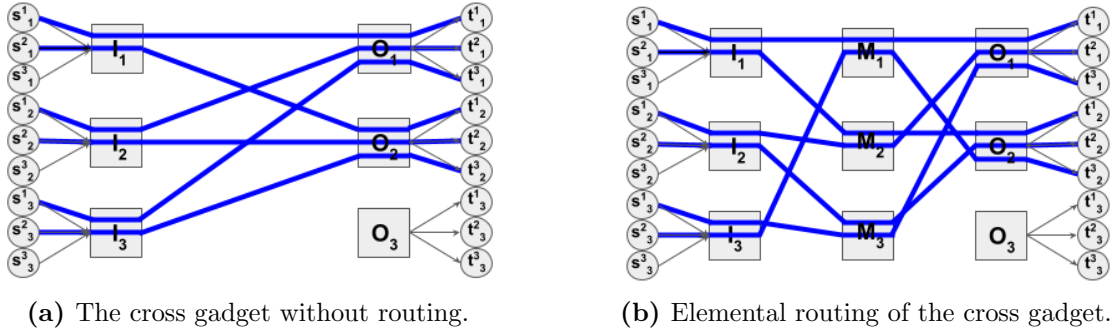

    \centering
    \captionsetup{labelfont=bf}
    \begin{subfigure}{.49\textwidth}
        \centering
        \captionsetup{labelfont=bf}
        \includegraphics[width=0.8\textwidth]{Figures/Gadget_macro.png} 
        \subcaption{The cross gadget without routing.}
        \label{fig:gadget_macro}
    \end{subfigure}
    \begin{subfigure}{.49\textwidth}
        \centering
        \captionsetup{labelfont=bf}
        \includegraphics[width=0.8\textwidth]{Figures/Gadget_Clos.png} 
        \subcaption{Elemental routing of the cross gadget.}
        \label{fig:gadget_Clos}
    \end{subfigure}
    \caption{The cross gadget of size ${N = 3}$.}
    \label{fig:gadget}
\end{figure}

Therefore, all routings of the cross gadget with congestion $1$ reduce to the routing described below and illustrated in Figure~\ref{fig:gadget_Clos} for $N = 3$.

\begin{itemize}
    \item The flow $(s^{j}_i, t^{i}_j)$ is assigned to $M_{m+1}$, where $m = i + j - 2 \pmod N$, for all $i \in [ N ]$ and $j \in [ N-1 ]$.
\end{itemize}

\noindent We call this routing the \textit{elemental} routing for the cross gadget.

\subsection{Limits to Congestion}
\label{sec:limits_congestion}

\quad \enskip If all flows have demand $1$, in which case there is at most one flow per source and per destination, then for every set of flows there is a routing with congestion $1$. (This is a corollary of Theorem~\ref{thm:rearrangeability}.) In contrast, the forthcoming theorem shows that if this premise is relaxed such that each flow has demand $1$ or $\nicefrac{1}{2}$, in which case there are at most two flows per source and per destination, then there are now sets of flows for which every routing has congestion greater than $1$. 

\begin{theorem}
    \label{thm:limit_congestion}
    Consider a Clos network $C_{N, R}$, for $N \ge 2$ and $R \ge N + 1$. There is a set of flows with demands $1$ or $\nicefrac{1}{2}$ for which the congestion of a minimum congestion routing is $\nicefrac{3}{2}$. 
\end{theorem}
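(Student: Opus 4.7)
The plan is to generalize the flow instance from Figure~\ref{fig:data-center} by attaching auxiliary half-demand flows to a cross gadget. I would place a cross gadget of size $N$ on input switches $I_1, \ldots, I_N$ and output switches $O_1, \ldots, O_{N-1}$, namely the $N(N-1)$ demand-$1$ flows $(s^j_i, t^i_j)$ for $i \in [N]$ and $j \in [N-1]$. Using the spare switches $I_{N+1}$ and $O_N$ guaranteed by $R \ge N+1$, I would add one demand-$1$ flow from $s^1_{N+1}$ to $t^1_N$, together with, for each $i \in [N]$, a demand-$\nicefrac{1}{2}$ flow from the cross gadget's free source $s^N_i$ to some destination in $O_N$. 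The $N$ half-demand flows entering $O_N$ can be spread over the destinations $t^2_N, \ldots, t^N_N$ by pairing up at most two per destination, which is feasible since $N \le 2(N-1)$ whenever $N \ge 2$; the per-source and per-destination unit-demand assumption is then easily checked.

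For the lower bound I would argue by contradiction: suppose a routing $r$ has congestion strictly less than $\nicefrac{3}{2}$. Then no two demand-$1$ flows incident to a common input or output switch can be assigned to the same middle switch, otherwise the corresponding link would have congestion at least $2$. Hence the restriction of $r$ to the cross gadget is a congestion-$1$ routing, and Lemma~\ref{lem:gadget} supplies, for each $i \in [N]$, a unique middle switch $m(i) \in [N]$ to which no cross-gadget flow leaving $I_i$ is assigned, with $i \mapsto m(i)$ a permutation of $[N]$. For each $i \in [N]$, every link $I_i M_m$ with $m \ne m(i)$ is already at congestion $1$, so the demand-$\nicefrac{1}{2}$ flow leaving $I_i$ toward $O_N$ is forced through $m(i)$. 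Since the $m(i)$ are distinct, each of the $N$ links $M_m O_N$ then carries demand $\nicefrac{1}{2}$ from exactly one such half-demand flow. The demand-$1$ flow from $I_{N+1}$ must land on some middle switch $M_a$, and then $M_a O_N$ has congestion at least $1 + \nicefrac{1}{2} = \nicefrac{3}{2}$, contradicting the assumption.

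For the matching upper bound, I would take the elemental routing of the cross gadget, assign the demand-$1$ flow from $I_{N+1}$ to any middle switch $M_a$, and assign each half-demand flow $(I_i, O_N)$ to $m(i)$. A direct check shows that only $M_a O_N$ attains congestion $\nicefrac{3}{2}$, while every other link has congestion at most $1$, so the minimum congestion equals $\nicefrac{3}{2}$.

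The main obstacle is the first step of the lower-bound argument: one must justify that, under the weaker assumption of congestion $< \nicefrac{3}{2}$ rather than $= 1$, the restriction of the routing to the cross gadget still satisfies the hypotheses of Lemma~\ref{lem:gadget}, so that the permutation structure of the free middle switches applies and the half-demand flows are pigeonholed into the unique ``gap'' middle switch at each input.
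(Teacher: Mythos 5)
Your proposal is correct and takes essentially the same approach as the paper: a cross gadget of size $N$ plus one demand-$\nicefrac{1}{2}$ flow per input switch to $O_N$ and one demand-$1$ flow from the spare input switch to $O_N$, with Lemma~\ref{lem:gadget} used to pigeonhole the half-demand flows into the ``gap'' middle switches so that the demand-$1$ flow must collide. The ``main obstacle'' you flag resolves immediately: the cross-gadget flows all have demand $1$, so their contribution to any link's congestion is a nonnegative integer, and congestion strictly below $\nicefrac{3}{2}$ forces that contribution to be $0$ or $1$ on every link — hence the restriction of the routing to the cross gadget has congestion exactly $1$ and Lemma~\ref{lem:gadget} applies.
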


\begin{proof}[\textbf{Proof of Theorem~\ref{thm:limit_congestion}}]
    We design a set of flows in the Clos network $C_{N, N+1}$ composed of flows with demand $1$ or $\nicefrac{1}{2}$ for which every routing has congestion at least $\nicefrac{3}{2}$. Since increasing the number of input and output switches from $N+1$ to $R$ does not change the number of source-destination paths from the first $N+1$ input switches to the first $N+1$ output switches, if a set of flows satisfies the theorem requirements in the network $C_{N, N+1}$, then the same set also satisfies them in the network $C_{N, R}$. Consequently, the conclusion follows. \newline

    \textbf{Designing the set of flows:} The posited set of flows, illustrated in Figure~\ref{fig:limit_congestion_1} for $N = 3$, is composed of three types of flows:
    
    \begin{itemize}
        \item The type 1 flows (in blue) correspond to the cross gadget of size $N$.
        \item \textit{Type 2 (in orange):} There is one flow $(s^{N}_i, t^{\lceil \nicefrac{i}{2} \rceil}_N)$ with demand $\nicefrac{1}{2}$, for all $i \in [ N ]$.     
        \item \textit{Type 3 (in green):} There is one flow $(s^{N}_{N+1}, t^{N}_N)$ with demand $1$. 
    \end{itemize}

    \begin{figure}
        \centering
        \captionsetup{labelfont=bf}
        \begin{subfigure}{.49\textwidth}
            \centering
            \captionsetup{labelfont=bf}
            \includegraphics[width=0.8\textwidth]{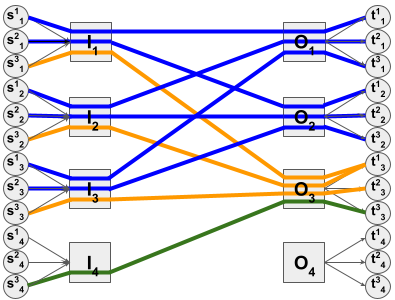} 
            \subcaption{The flows without routing.}
            \label{fig:limit_congestion_1}
        \end{subfigure}
        \begin{subfigure}{.49\textwidth}
            \centering
            \captionsetup{labelfont=bf}
            \includegraphics[width=0.8\textwidth]{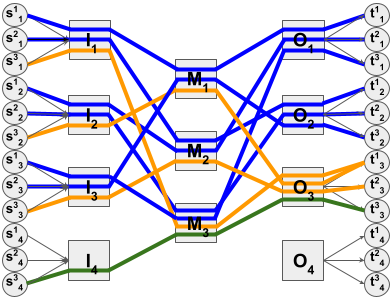} 
            \subcaption{A minimum congestion routing for the flows.}
            \label{fig:limit_congestion_2}
        \end{subfigure}
        \caption{The set of flows underlying the proof of Theorem~\ref{thm:limit_congestion} for ${N = 3}$ and $R = 4$.}
        \label{fig:limit_congestion}
    \end{figure}

    \textbf{Calculating a minimum congestion routing:} We show that the congestion of a minimum congestion routing is $\nicefrac{3}{2}$. The proof involves two steps. The first step shows that there exists a routing with congestion $\nicefrac{3}{2}$. The posited minimum congestion routing is described below and illustrated in Figure~\ref{fig:limit_congestion_2} for $N = 3$.
    
    \begin{itemize}
        \item The type 1 flows are assigned according to the elemental routing for the cross gadget.
        \item The type 2 flow $(s^{N}_i, t^{\lceil \nicefrac{i}{2} \rceil}_N)$ is assigned to $M_{m+1}$, where $m = i - 2 \pmod N$, for all $i \in [ N ]$.
        \item The type 3 flow is assigned to $M_N$.
    \end{itemize}
    Every edge except for $M_N O_N$ is traversed by at most one flow, and thus has congestion at most $1$. Edge $M_N O_N$ is traversed by both the type 2 flow leaving $I_1$ and the type 3 flow, and thus has congestion $\nicefrac{3}{2}$. Therefore, the routing has congestion $\nicefrac{3}{2}$.
    
    The second step shows that there is no routing with congestion less than $\nicefrac{3}{2}$. First, from Lemma~\ref{lem:gadget}, we know that every routing of the type 1 flows with congestion less than $\nicefrac{3}{2}$ reduces to the elemental routing for the cross gadget, and thus assume that the type 1 flows are routed according to it. Then, we distinguish two cases, which assert that every subsequent routing of the type 2 and the type 3 flows lead to congestion at least $\nicefrac{3}{2}$, thereby concluding the proof.  
    
    \begin{itemize}
        \item \textit{Case 1:} If for all $i \in [ N ]$ the type 2 flow leaving $I_i$ is assigned to the middle switch to which no type 1 flows leaving $I_i$ is assigned to, then the $N$ type 2 flows are assigned to $N$ different middle switches (as in Figure~\ref{fig:limit_congestion_2}). Consequently, the type 3 flow is assigned to the same middle switch than some type 2 flow, in which case the routing has congestion~$\nicefrac{3}{2}$.   
        \item \textit{Case 2:} Otherwise, there is $i \in [N]$ such that the type 2 flow leaving $I_i$ is assigned to the same middle switch than some type 1 flow leaving $I_i$, in which case the routing has congestion at least~$\nicefrac{3}{2}$. \qedhere
    \end{itemize}
\end{proof}

\subsection{Limits to Approximation}
\label{sec:limits_approximation}

\quad \enskip If all flows have demand $1$, then for every set of flows a routing with congestion $1$ can~be found in polynomial-time. (This is again a corollary of Theorem~\ref{thm:rearrangeability}.)  In contrast, the theorem below shows that if this premise is again relaxed such that each flow has demand $1$ or $\nicefrac{1}{2}$, then it becomes impossible to distinguish in polynomial-time between a routing with congestion at most $1$ and a routing with congestion at least $\nicefrac{3}{2}$. Consequently, the theorem implies that it is impossible to approximate a minimum congestion routing by a factor less than $\nicefrac{3}{2}$.

\begin{theorem}
    \label{thm:limit_approximation}
    For special case of the minimum congestion routing problem in Clos networks where flows have demand $1$ or $\nicefrac{1}{2}$, the question of deciding if there is a routing with congestion at most $1$ is $NP$-complete.
\end{theorem}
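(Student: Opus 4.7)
Containment in NP is immediate, since a routing is a polynomial-size witness and its congestion is verified in polynomial time by summing, at each link, the demands of the flows passing through it. The substantive work is NP-hardness, which the plan establishes by reducing from 3-edge coloring of cubic graphs, NP-complete by Holyer's theorem. Given a cubic graph $G = (V, E)$, the reduction produces an instance in a Clos network of dimension $(3, R)$ with $R = O(|V|)$ and all demands in $\{1, \nicefrac{1}{2}\}$, where the three middle switches $M_1, M_2, M_3$ play the role of the three colors, such that $G$ is 3-edge-colorable if and only if the instance admits a routing with congestion at most $1$.

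For each vertex $v \in V$, the construction attaches a vertex gadget consisting of three input switches $I^v_1, I^v_2, I^v_3$, two auxiliary output switches, and the six unit-demand flows of a cross gadget of size $3$ on them. Lemma~\ref{lem:gadget} implies that in any congestion-$1$ routing the cross gadget determines a bijection $\sigma_v : \{1,2,3\} \to \{M_1, M_2, M_3\}$, where $\sigma_v(i)$ is the unique ``free'' middle switch at $I^v_i$ (the one to which no cross-gadget flow leaving $I^v_i$ is assigned). Fixing an arbitrary ordering of the three edges incident to $v$, the construction interprets $\sigma_v(i)$ as the color of the $i$-th incident edge; bijectivity of $\sigma_v$ then automatically enforces that the three edges at $v$ receive three distinct colors.

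For each edge $e = (u, v) \in E$, which is the $i$-th incident edge at $u$ and the $j$-th at $v$, the construction attaches an edge gadget with a dedicated output switch $O^e$: two fresh input switches each contribute one unit-demand flow into two distinct destinations of $O^e$, and two half-demand flows are added, one leaving the third (unused) source of $I^u_i$ and one leaving the third source of $I^v_j$, both entering the third destination of $O^e$. The two unit flows into $O^e$ cannot share a middle switch without the corresponding $M_\ast O^e$ link exceeding capacity, so in any congestion-$1$ routing they block two of the three $M_\ast O^e$ links and leave a single ``free'' middle switch $\tau(e)$ at $O^e$; both half-demand flows are therefore forced onto $M_{\tau(e)}$. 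Symmetrically, the half-demand flow leaving $I^u_i$ sees two of its three outgoing links already carrying unit cross-gadget demand, so adding $\nicefrac{1}{2}$ to either of those would exceed capacity, forcing it onto $M_{\sigma_u(i)}$; the same argument forces the flow leaving $I^v_j$ onto $M_{\sigma_v(j)}$. Combining both constraints yields $\sigma_u(i) = \tau(e) = \sigma_v(j)$, so the two endpoints agree on a color for $e$.

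The equivalence of the decision problems then follows in both directions. A congestion-$1$ routing yields a well-defined edge coloring $c(e) := \sigma_u(i) = \sigma_v(j)$ that is proper because each $\sigma_v$ is a bijection. Conversely, a proper 3-edge-coloring $c$ guides a congestion-$1$ routing by routing each vertex gadget with a permutation realizing $\sigma_v(i) := c(\text{$i$-th edge of } v)$ (which exists because of the symmetry of middle switches in the elemental routing of Lemma~\ref{lem:gadget}), routing each edge gadget's two unit flows through the two middle switches other than $M_{c(e)}$, and routing both of its half-demand flows through $M_{c(e)}$. The main obstacle is the bookkeeping that guarantees no source, destination, or link is oversubscribed by the superposition of gadgets: each vertex $v$ uses two of the three sources on each $I^v_i$ for unit cross-gadget flows and the third for exactly one half-demand edge flow associated with the $i$-th incident edge (the cubicity of $G$ makes this fit exactly), and each $O^e$ uses its three destinations by placing the two unit flows at two separate destinations and the two half-demand flows together at the third. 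Once these counts are verified, the total number of switches is linear in $|V|$, so the reduction is polynomial and the proof is complete.
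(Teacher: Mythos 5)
Your proposal is correct and follows essentially the same route as the paper's proof: a reduction from $3$-edge coloring in which each vertex is modeled by a cross gadget whose unique free middle switch at each input switch (Lemma~\ref{lem:gadget}) encodes the color of the corresponding incident edge, and each edge is modeled by an edge gadget with two unit-demand flows that block two of the three middle switches at its output switch, plus two half-demand incident flows that are thereby forced to agree on the remaining middle switch. The only differences are cosmetic — you use two fresh input switches per edge gadget where the paper routes both unit edge flows from a single input switch, and you start from cubic graphs (Holyer) while the paper allows maximum degree $3$ — neither affects correctness.
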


\begin{proof}[\textbf{Proof of Theorem~\ref{thm:limit_approximation}}]
    We introduce a reduction from the $3$-edge coloring problem, which is known to be $NP$-complete~\cite{Karp_75}, to the question of deciding if there~is~a routing with congestion at most $1$ for flows whose demands are $1$ or $\nicefrac{1}{2}$. In the $3$-edge coloring problem, the input is an undirected graph whose maximum vertex degree is $3$. The question is to decide if there exists a \textit{proper edge-coloring}, meaning an assignment from edges to colors such that no two adjacent edges are assigned the same color, using at most $3$ colors.  \newline

    \textbf{Description of the reduction:} Consider an input graph $G = (V(G), E(G))$ to the $3$-edge coloring problem, with vertices indexed from $1$ to $|V(G)|$, $V(G) = \{ v_1, v_2, \dots, v_{|V(G)|} \}$, and edges indexed from $1$ to $|E(G)|$, $E(G) = \{ e_1, e_2, \dots, e_{|E(G)|} \}$. For each vertex $v_k \in V(G)$, let $\text{rank}_{v_k}$ be an arbitrary ranking of the neighbors of $v_k$, such that $\text{rank}_{v_k}( v_l ) \in \{ 1, 2, 3 \}$ denotes the ranking of a neighbor $v_l$ of $v_k$. We construct an instance of the minimum congestion routing problem from $G$. The construction is illustrated in Figure~\ref{fig:limits_approximation} for a particular input graph. 
    
    The Clos network is denoted by $C_{3, 3|V(G)| + |E(G)|}$. The ToR switches are divided into $|V(G)|$ vertex blocks and $|E(G)|$ edge blocks, with the $k$'th vertex block corresponding to input switches $I_{3(k-1)+1}$ through $I_{3(k-1)+3}$ and output switches $O_{3(k-1)+1}$ through $O_{3(k-1)+3}$, $k \in [ \, | V(G) | \, ]$, and the $m$'th edge block corresponding to input switch $I_{3|V|+m}$ and output switch $O_{3|V|+m}$, $m \in [ \, | E(G) | \, ]$.
    
    The set of flows is denoted by $\mathcal{F}( G )$, and is composed of three types of flows: 
    
    \begin{itemize}
        \item For each vertex $v_k \in V(G)$, there is a set of \textit{vertex flows} (in blue) in the $k$'th vertex block corresponding to the translation of the cross gadget of size $3$ to that block.

        \item For each edge $e_m = \{ v_k, v_l \} \in E(G)$, there are two \textit{edge flows} in the $m$'th edge block (in black): one flow $( s^1_{3|V(G)|+m}, t^1_{3|V(G)|+m} )$ with demand $1$, and one flow $( s^2_{3|V(G)|+m}, t^2_{3|V(G)|+m} )$ with demand~$1$.
        \item For each edge $e_m = \{ v_k, v_l \} \in E(G)$, there is one \textit{incident flow} from the $k$'th vertex block to the $m$'th edge block, and one \textit{incident flow} from the $l$'th vertex block to the $m$'th edge block (in pink): one flow $( s^3_{3(k-1) + \text{rank}_{v_k}( v_l )}, t^3_{3 |V(G)|+m} )$ with demand $\nicefrac{1}{2}$ and one flow $( s^3_{3(l-1) + \text{rank}_{v_l}( v_k ) }, t^3_{3 |V(G)|+m} )$ with demand $\nicefrac{1}{2}$. 
    \end{itemize}  

    \begin{figure}
        \centering
        \captionsetup{labelfont=bf}
        \begin{subfigure}[t]{.49\textwidth}
            \centering
            \captionsetup{labelfont=bf}
            \includegraphics[width=0.6\textwidth]{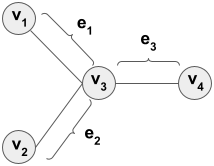} 
            \subcaption{Input graph $G$ to the $3$-edge \newline coloring problem.}
            \label{fig:limits_approximation_1}
        \end{subfigure}
        \begin{subfigure}[t]{.49\textwidth}
            \centering
            \captionsetup{labelfont=bf}
            \includegraphics[width=0.86\textwidth]{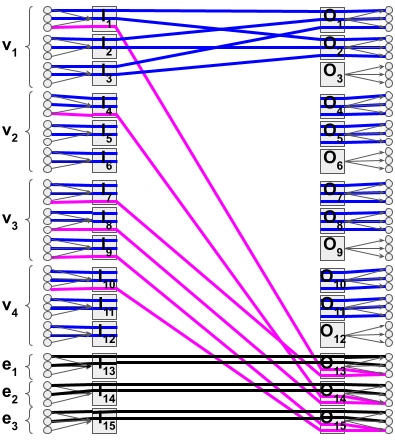} 
            \subcaption{Input flows $\mathcal{F}( G )$ to the minimum congestion routing problem.}
            \label{fig:limits_approximation_2}
        \end{subfigure}
        \caption{The reduction underlying the proof of Theorem~\ref{thm:limit_approximation} for a particular input graph. The vertex flows in the $2$'nd, $3$'rd, and $4$'th vertex blocks are not shown. The coloring of the different types of flows in $\mathcal{F}( G )$ is unrelated to the edge coloring in $G$.}
        \label{fig:limits_approximation}
    \end{figure}

    The objective of the construction is to establish the following correspondence between a routing of $\mathcal{F}(G)$ and an edge-coloring of $G$. Given a routing of $\mathcal{F}( G )$ with congestion $1$, the routing of the incident flows in $\mathcal{F}( G )$ yields a proper edge-coloring of $G$ using at most $3$ colors; conversely, given a proper edge-coloring of $G$ using at most $3$ colors, the coloring yields a routing of the incident flows in $\mathcal{F}(G)$ that can be extended to a routing of $\mathcal{F}(G)$ with congestion $1$. The next lemma paves the way for this correspondence by identifying the properties met by the routing of the incident flows in a routing of $\mathcal{F}(G)$ with congestion $1$.
        
    \begin{lemma}
        \label{lem:construction}
        For every routing of $\mathcal{F}(G)$ with congestion $1$, the routing of the incident flows satisfies the following properties:
        \begin{enumerate}[label=(\textbf{P\arabic*)}]
            \item For every edge $e_m \in E(G)$, the two incident flows entering the $m$'th edge block are assigned to the same middle switch.
            \item For every vertex $v_k \in V(G)$, each of the incident flows leaving the $k$'th vertex block is assigned to a different middle switch.
        \end{enumerate}
        If there is a routing of the incident flows in $\mathcal{F}(G)$ satisfying the previous properties, then there is a routing of $\mathcal{F}(G)$ with congestion $1$. 
    \end{lemma}
    
    \begin{proof}
        To show the first part, suppose that there is a routing of $\mathcal{F}( G )$ with congestion $1$. We prove that the first property holds. Since the routing assigns the two edge flows entering the $m$'th edge block to two different middle switches, we conclude that the two incident flows are assigned to the remaining middle switch. We now prove that the second property holds. From Lemma~\ref{lem:gadget}, we known that the routing does not assign a vertex flow to a different middle switch at each input switch in the $k$'th vertex block. Therefore, we conclude that the incident flows leaving the $k$'th vertex block are assigned to different middle switches.
        
        To show the second part, suppose that there is a routing of the incident flows in $\mathcal{F}( G )$ satisfying both properties. We prove that it can be extended to a routing of $\mathcal{F}( G )$ with congestion $1$. From the first property, the two edge flows in the $m$'th edge block can be routed with congestion $1$ if assigned to the two remaining middle switches. From the second property, the vertex flows leaving the $k$'th vertex block can be routed with congestion $1$ if assigned according to the elemental routing of the cross gadget (upon the appropriate numbering of the middle switches). The conclusion follows.        
    \end{proof}

    \textbf{Correctness of the reduction:} We show that there is a routing of $\mathcal{F}(G)$ with congestion $1$ if and only if there is a proper edge-coloring of $G$ using at most $3$ colors. 
    
    ($\Rightarrow$) Suppose that there is a routing of $\mathcal{F}(G)$ with congestion $1$. Consider the edge-coloring of $G$ using at most $3$ colors obtained from the routing where, for each edge $e_m \in E(G)$, edge $e_m$ is assigned color $c$ if the two incident flows entering the $m$'th edge block are assigned to middle switch $M_c$, for some $c \in \{ 1, 2, 3 \}$. From the first part of Lemma~\ref{lem:construction}, since the routing of the incident flows meets the first property, we deduce that the posited coloring is well-defined, and, since the routing of the incident flows meets the second property, we further deduce that it is proper, to conclude that there is a proper edge-coloring of $G$ using at most 3 colors.         
    
    ($\Leftarrow$) Suppose that there is a proper edge-coloring of $G$ using at most $3$ colors. Consider the routing of the incident flows in $\mathcal{F}( G )$ obtained from the coloring where, for each edge $e_m \in E(G)$, the two incident flows entering the $m$'th edge block are assigned to middle switch $M_c$ if $e_m$ is assigned color $c$, for some $c \in \{ 1, 2, 3 \}$. By construction, we deduce that the posited routing of the incident flows meets the first property, and, since the coloring is proper, we further deduce that it meets the second property, to conclude from the second part of Lemma~\ref{lem:construction} that there is a routing of $\mathcal{F}( G )$ with congestion $1$. \qedhere   
\end{proof}

\section{Online: Lower Bounds on Congestion and Approximation}
\label{sec:online-lower-bounds}

\quad We present lower bounds on the worst-case congestion and approximation of minimum congestion routings in Clos networks by online algorithms. An online algorithm is presented with a sequence of flows. A \textit{deterministic online algorithm} defines a routing for every sequence of flows that satisfies the following property: for all prefixes $P$ of a sequence $F$ of flows, the routing for $P$ when the algorithm is given $P$ equals the routing for $P$ when the algorithm is given $F$. A \textit{randomized online algorithm} defines a probability distribution over the set of all deterministic algorithms; a randomized algorithm with a single-point distribution reduces to a deterministic algorithm. 

In \S\ref{sec:deterministic}, we show that, for any deterministic online algorithm, there is a sequence of flows for which, while the congestion of a minimum congestion routing is $1$, the congestion of the routing returned by the algorithm is at least $2$. In \S\ref{sec:randomized}, we generalize the previous result from any deterministic online to any randomized online algorithm. These results implies that it is impossible for any online algorithm, randomized or deterministic, to approximate a minimum congestion routing by a factor less than $2$.

\subsection{Limits to Deterministic Online Algorithms}
\label{sec:deterministic}

\quad \enskip In this section, we assume that all flows have demand $1$, meaning that there is at most one flow per source and per destination. Under this assumption, the congestion of a minimum congestion routing is $1$ for every set of flows. In contrast, the theorem below shows no deterministic online algorithm can avoid returning a routing with congestion at least $2$ for some sequence of flows. 

\begin{theorem}
    \label{thm:online_lower_bound_congestion}
    Consider a Clos network $C_{N, R}$, for $N \ge 2$ and $R \ge 3$. For every deterministic online algorithm, there is a sequence of flows with demand $1$ for which the congestion of the routing returned by the algorithm is at least $2$.
\end{theorem}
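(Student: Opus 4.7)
The plan is an adversarial argument: given any deterministic online algorithm $A$, I will construct two unit-demand sequences $X = (X_1, X_2)$ and $Y = (X_1, Y_2)$ sharing a common prefix $X_1$, both of which admit a congestion-$1$ routing (automatic from Corollary~\ref{thm:rearrangeability_2}, since unit-demand inputs have at most one flow per source and per destination), yet whose congestion-$1$ routings force incompatible placements of two distinguished prefix flows. Since $A$ must commit to a routing of $X_1$ before observing the suffix, at most one of the two incompatible placements agrees with $A$'s commitment, and the adversary supplies the other suffix.

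For the base case $N = 2$ and $R = 3$, I take $X_1 = \{ f_1 \colon (I_1, O_1),\; f_2 \colon (I_2, O_2) \}$, $X_2 = \{ (I_1, O_2),\; (I_2, O_1) \}$, and $Y_2 = \{ (I_3, O_1),\; (I_3, O_2) \}$. A direct check shows that in any congestion-$1$ routing of $X$ the two prefix flows $f_1, f_2$ sit on the \emph{same} middle switch (because each of $I_1, I_2, O_1, O_2$ carries exactly two flows, and the bipartite alternation pins $f_1$ and $f_2$ together), whereas in any congestion-$1$ routing of $Y$ they sit on \emph{different} middle switches (the two $I_3$-flows occupy both middle switches, and through $O_1$ and $O_2$ this forces $f_1, f_2$ to complementary positions). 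For general $N \ge 2$ and $R \ge 3$, I embed this rigidity by padding with additional flows between the saturating ToR pairs so as to fully exhaust the source and destination servers at $I_1, I_2, O_1, O_2$; a pigeonhole argument on the middle switches isolates a pair of ``free'' middle switches on which the base-case rigidity applies verbatim.

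The adversarial conclusion is then immediate. Run $A$ on $X_1$ and inspect its assignments $r(f_1), r(f_2)$. If $r(f_1) = r(f_2)$, feed the suffix $Y_2$; otherwise feed $X_2$. In the first situation, the two $I_3$-flows of $Y_2$ must use both middle switches, so the middle switch shared by $f_1$ and $f_2$ is forced to carry two unit-demand flows on one of the output-side links $M_m O_1$ or $M_m O_2$, giving congestion at least~$2$. In the second situation, the flow $(I_1, O_2) \in X_2$ has its sole unused $I_1$-side link landing on a middle switch whose output-side link to $O_2$ already carries $f_2$; a symmetric obstruction handles $(I_2, O_1)$, so every completion again forces a link of congestion at least~$2$.

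The main obstacle is the generalization of the forcing to arbitrary $N$: for the argument to transfer cleanly, the filler flows must be chosen so that (i) both $X$ and $Y$ remain congestion-$1$ routable, which is handed to us by Corollary~\ref{thm:rearrangeability_2} once the bipartite flow graph has maximum degree $N$; (ii) in any such routing the two distinguished flows $f_1, f_2$ are confined to a specific pair of ``free'' middle switches; and (iii) the wrong prefix commitment still produces a doubly-loaded link once the suffix is served. The cleanest route is a Hall-type matching argument: once the filler flows saturate all but two middle switches at $I_1, I_2, O_1, O_2$, the remaining freedom degenerates to a $2 \times 2$ bipartite matching whose two perfect matchings correspond exactly to the two placements of $f_1, f_2$ identified in the base case, and the adversarial dichotomy carries over unchanged.
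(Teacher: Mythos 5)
Your high-level adversarial strategy is the same as the paper's: construct two sequences $X = (X_1, X_2)$ and $Y = (X_1, Y_2)$ agreeing on a prefix $X_1$, show that their congestion-$1$ routings require incompatible placements of the prefix flows, and have the adversary supply whichever suffix defeats the algorithm's committed prefix routing. Your base case ($N=2$, $R=3$) correctly reproduces the paper's construction (Figure~\ref{fig:online_lower_bound_congestion}) and your analysis of it is sound. The paper generalizes to arbitrary even $N$ by using $\nicefrac{N}{2}$ parallel copies of each flow (so that $(I_1,O_1)$ and $(I_2,O_2)$ flows each occupy a set of $\nicefrac{N}{2}$ middle switches, and Lemma~\ref{lem:sequences} reasons about whether those two \emph{sets} coincide), whereas you propose padding with filler flows so as to exhaust the ToRs at $I_1, I_2, O_1, O_2$. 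Both scalings can be made to work, but they are genuinely different parameterizations.

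The gap is in your adversarial dichotomy for general $N$. You propose to branch on whether $r(f_1) = r(f_2)$ for two distinguished prefix flows $f_1 \colon (I_1,O_1)$ and $f_2 \colon (I_2,O_2)$. With your filler construction, the prefix contains $N-1$ unit-demand flows on $(I_1,O_1)$ and $N-1$ on $(I_2,O_2)$, all indistinguishable to the online algorithm from $f_1$ and $f_2$ respectively, so the designation of $f_1, f_2$ is arbitrary and the condition $r(f_1)=r(f_2)$ is not a well-defined property of the algorithm's routing. Concretely, for $N=3$ the algorithm may route the two $(I_1,O_1)$ flows to $\{M_1, M_2\}$ and the two $(I_2,O_2)$ flows also to $\{M_1, M_2\}$, with $r(f_1)=M_1$ and $r(f_2)=M_2$; your rule would feed $X_2$, but since the two sets coincide, $X$ still admits a congestion-$1$ completion (route both $(I_1,O_2)$ and $(I_2,O_1)$ to the common free switch $M_3$), and the adversary loses. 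The condition must instead be whether the \emph{sets} of middle switches used by the $(I_1,O_1)$ flows and by the $(I_2,O_2)$ flows coincide after the prefix, exactly as in the paper's Lemma~\ref{lem:sequences}. With that correction your filler approach goes through (and has the minor advantage of not requiring $N$ to be even, which the paper's $\nicefrac{N}{2}$-copies construction implicitly assumes). The appeals to a ``Hall-type matching argument'' and a ``$2\times 2$ bipartite matching'' would also need to be replaced by the explicit set-based case analysis rather than asserted.
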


\begin{proof}[\textbf{Proof of Theorem~\ref{thm:online_lower_bound_congestion}}]
    We design two sequences of flows in the Clos network $C_{N, 3}$ composed of flows with demand $1$ for which every deterministic online algorithm returns a routing with congestion at least $2$ for at least one of the sequences. As in the proof of Theorem~\ref{thm:limit_congestion}, if a sequence of flows satisfies the theorem requirements in the network $C_{N, 3}$, then the same sequence satisfies them in the network $C_{N, R}$.  

    \vspace{7pt}

    \textbf{Designing the sequences of flows.} The posited sequences are denoted by $X = (X_1, X_2)$ and $Y = (Y_1, Y_2)$, and are illustrated in Figure~\ref{fig:online_lower_bound_congestion} for $N = 4$ and $R = 3$. Each of the sequences consists of one subsequence followed by another, with the sequences agreeing on the prefix, $X_1 = Y_1$, and disagreeing on suffix, $X_2 \neq Y_2$. The arrival order among the flows within a subsequence is arbitrary, and each flow is identified by its input-output switch pair (with each flow incident to a common ToR switch incident to a different server of that switch).
    
    \begin{itemize}
        \item Subsequence $X_1$ consists of $\nicefrac{N}{2}$ flows ($I_1$, $O_1$) (in orange), and $\nicefrac{N}{2}$ flows ($I_2$, $O_2$) (in~green).
        \item Subsequence $X_2$ consists of $\nicefrac{N}{2}$ flows ($I_1$, $O_2$) (in blue).
        \item Subsequence $Y_2$ consists of $\nicefrac{N}{2}$ flows ($I_3$, $O_1$) (in pink) and $\nicefrac{N}{2}$ flows ($I_3$, $O_2$) (in black).
    \end{itemize}

    \begin{figure}[t]
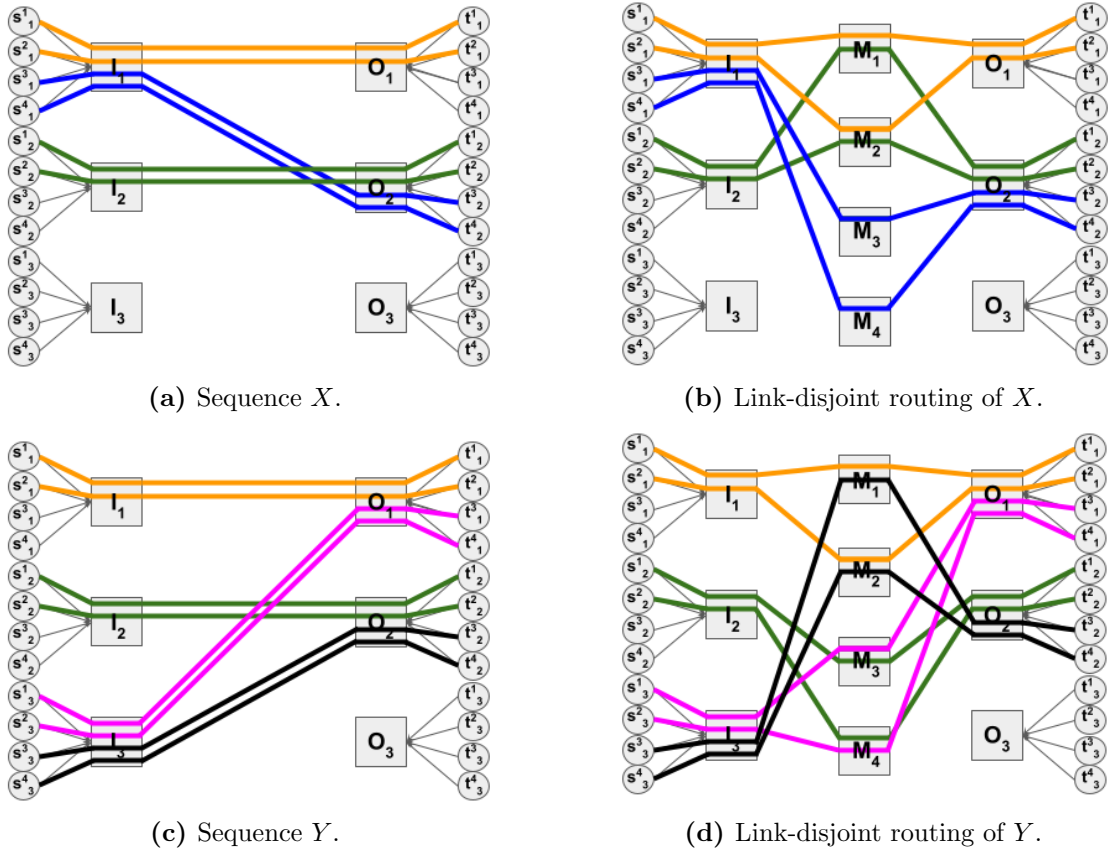

        \centering
        \captionsetup{labelfont=bf}
        \begin{subfigure}{.49\textwidth}
            \centering
            \captionsetup{labelfont=bf}    
            \includegraphics[width=0.8\textwidth]{Figures/Sequence_X_Macro.png} 
            \subcaption{Sequence $X$.}
            \label{fig:online_lower_bound_congestion_1}
        \end{subfigure}
        \begin{subfigure}{.49\textwidth}
            \centering
            \captionsetup{labelfont=bf}     
            \includegraphics[width=0.8\textwidth]{Figures/Sequence_X_Clos.png} 
            \subcaption{Link-disjoint routing of $X$.}
            \label{fig:online_lower_bound_congestion_2}
        \end{subfigure}
    
        \vspace{5pt}
        
        \begin{subfigure}{.49\textwidth}
            \centering
            \captionsetup{labelfont=bf}       
            \includegraphics[width=0.8\textwidth]{Figures/Sequence_Y_Macro.png} 
            \subcaption{Sequence $Y$.}
            \label{fig:online_lower_bound_congestion_3}
        \end{subfigure}
        \begin{subfigure}{.49\textwidth}
            \centering
            \captionsetup{labelfont=bf}       
            \includegraphics[width=0.8\textwidth]{Figures/Sequence_Y_Clos.png} 
            \subcaption{Link-disjoint routing of $Y$.}
            \label{fig:online_lower_bound_congestion_4}
        \end{subfigure}

    \caption{The sequences of flows underlying Theorem~\ref{thm:online_lower_bound_congestion} for ${N = 4}$.}
    \label{fig:online_lower_bound_congestion}
    \end{figure}

    The key property of sequences $X$ and $Y$, detailed in the next lemma, is that, although $X_1 = Y_1$, for any link-disjoint routing of $X$ and any link-disjoint routing of $Y$, the routing of $X_1$ in the former differs from the routing of $X_1$ in the latter. Consequently, if the prefix of a given sequence is $X_1$ and the suffix is either $X_2$ or $Y_2$, then any deterministic online algorithm must choose between one of the previous routings of $X_1$ without knowing in advance whether $X_1$ will be followed by $X_2$ or by $Y_2$, such that the algorithm fails to returns a link-disjoint routing for at least one of $X$ and $Y$, as formalized in the final proof step.
    
    \begin{lemma}
        \label{lem:sequences} 
        Let $\mathcal{M}$ be the set of all middle switches in the Clos network, $\mathcal{M} \coloneq \{ M_i \, | \, i \in [ N ] \}$. For every link-disjoint routing of sequence $X$, it holds that:
        
        \begin{enumerate}[label=(\textbf{P\arabic*)}]
            \item Each of the $\nicefrac{N}{2}$ flows $(I_1, O_1)$ is assigned to a different middle switch from a set $\mathcal{N}$ of $\nicefrac{N}{2}$ middle switches; likewise, each of the $\nicefrac{N}{2}$ flows $(I_2, O_2)$ is assigned to a different middle switch from the same set $\mathcal{N}$. 
        \end{enumerate} 
        
        \noindent For every link-disjoint routing of sequence $Y$, it holds that:
        
        \begin{enumerate}[label=(\textbf{P\arabic*)}]
            \setcounter{enumi}{1}
            \item Each of the $\nicefrac{N}{2}$ flows $(I_1, O_1)$ is assigned to a different middle switch from a set $\mathcal{N}$ of $\nicefrac{N}{2}$ middle switches; oppositely, each of the $\nicefrac{N}{2}$ flows $(I_2, O_2)$ is assigned to a different middle switch from the set $\mathcal{M} \setminus \mathcal{N}$.
        \end{enumerate}        
        
        \noindent Therefore, no routing of $X_1$ can satisfy both properties. 
    \end{lemma}

    \begin{proof}
        To show the first statement, consider an arbitrary link-disjoint routing of $X$. Since~the $N$ flows leaving $I_1$ are spread over all $N$ middle switches, a middle switch is assigned a flow $(I_1, O_1)$ precisely when it is not assigned a flow $(I_1, O_2)$; similarly, since the $N$ flows entering $O_2$ are spread over all $N$ middle switches, a middle switch is assigned a flow $(I_2, O_2)$ precisely when it is not assigned a flow $(I_1, O_2)$. Therefore, each of the $\nicefrac{N}{2}$ middle switches to which a flow $(I_1, O_1)$ is assigned is also assigned a flow $(I_2, O_2)$, thus proving the first statement. 
    
        To show the second statement, consider now an arbitrary link-disjoint routing of $Y$. Following the same argument as above, a middle switch is assigned a flow $(I_1, O_1)$ precisely when it is not assigned a flow $(I_3, O_1)$, and a middle switch is assigned a flow $(I_3, O_2)$ precisely when it is not assigned a flow $(I_3, O_1)$; consequently, a middle switch is assigned a flow $(I_1, O_1)$ precisely when it is assigned a flow $(I_3, O_2)$. Since a middle switch is assigned a flow $(I_3, O_2)$ precisely when when it is not assigned a flow $(I_2, O_2)$, each of the $\nicefrac{N}{2}$ middle switches to which a flow $(I_1, O_1)$ is not assigned assigned a flow $(I_2, O_2)$, thus proving the second statement. 
    \end{proof}

    \textbf{Calculating the congestion of a deterministic algorithm.} Consider an arbitrary deterministic online algorithm. We show that the algorithm fails to return a link-disjoint routing for at least one of $X$ and $Y$, to conclude that it returns a routing with congestion at least $2$ for at least one of them. Let $r$ be the routing returned by the algorithm for prefix $X_1$. We distinguish two cases, depending on whether $r$ satisfies property P1 in Lemma~\ref{lem:sequences}.
    
    \begin{itemize}
        \item[$\#$] \textit{Case 1:} Suppose that $r$ satisfies property P1, in which case $r$ does not satisfy property P2. Then, we deduce that the algorithm returns a routing for $Y$ that does not property P2, which means that the routing for $Y$ is not link-disjoint. 
        \item[$\#$] \textit{Case 2:} Suppose that $r$ does not satisfy property P1. Then, we deduce that the algorithm returns a routing for $X$ that does not satisfy property P1, which means that the routing for $X$ is not link-disjoint. \qedhere
    \end{itemize}
\end{proof}

\subsection{Limits to Randomized Online Algorithms}
\label{sec:randomized}

\quad \enskip The previous theorem does not preclude the possibility that a randomized online algorithm can avoid with non-negligible probability returning a routing with congestion greater than $2$ for every sequence of flows. The next theorem refutes this possibility.  

\begin{theorem}
    \label{thm:online_lower_bound_congestion_random}
    Consider a Clos network $C_{N, R}$, for $N \ge 2$ and $R \ge 3$. For every randomized online algorithm, there is a sequence of flows with demand $1$ for which the expected congestion of the routing returned by the algorithm is at least $2 - \nicefrac{1}{2^S}$, where $S = \lfloor \nicefrac{R}{3} \rfloor$.
\end{theorem}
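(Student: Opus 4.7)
The plan is to prove the randomized lower bound via Yao's Minimax Principle: I will exhibit a probability distribution over flow sequences on which every deterministic online algorithm has expected congestion at least $2 - 1/2^S$, and then appeal to Yao to transfer the bound to randomized algorithms.

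First, I would construct the random input as a ``supersequence'' built from $S = \lfloor R/3 \rfloor$ independent blocks. Each block $b \in [S]$ uses a disjoint triple of input ToR switches $\{I_{3b-2}, I_{3b-1}, I_{3b}\}$ and the corresponding output triple, and on these switches I place an isomorphic copy of the pair $(X, Y)$ from Theorem~\ref{thm:online_lower_bound_congestion}, yielding two subsequences sharing a common prefix $X_1^{(b)}$ and differing in the suffix, which is either $X_2^{(b)}$ or $Y_2^{(b)}$. The random input presents all prefixes $X_1^{(1)}, \ldots, X_1^{(S)}$ first and then presents suffixes $Z^{(1)}, \ldots, Z^{(S)}$, where each $Z^{(b)}$ is drawn independently and uniformly from $\{X_2^{(b)}, Y_2^{(b)}\}$. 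Because the blocks are supported on disjoint ToR switches, the congestion on any link is determined by the routing within a single block, so the overall congestion equals the maximum congestion over the $S$ blocks.

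Next, I would analyze a fixed deterministic algorithm on this random input. After processing all prefixes, the algorithm has irrevocably fixed a routing $r_b$ of each $X_1^{(b)}$. Applying Lemma~\ref{lem:sequences} within block $b$: if $r_b$ satisfies property P1 of the lemma, then no extension of $r_b$ can yield a link-disjoint routing under suffix $Y_2^{(b)}$, forcing congestion at least $2$ in that block; symmetrically if $r_b$ fails P1, the same occurs under suffix $X_2^{(b)}$. Thus for each $b$ the event $B_b$ that block $b$'s congestion is at least $2$ is determined by the coin $Z^{(b)}$ alone and has probability exactly $1/2$, so $B_1, \ldots, B_S$ are mutually independent. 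Consequently the probability that some block attains congestion at least $2$ equals $1 - 2^{-S}$, and since unit-demand flows guarantee congestion at least $1$ on every occupied link, the expected congestion on this distribution is at least
\begin{align*}
2\,\bigl(1 - 2^{-S}\bigr) + 1 \cdot 2^{-S} \;=\; 2 - 2^{-S}.
\end{align*}
Yao's Minimax Principle then delivers the desired bound for every randomized algorithm, and the padding from $C_{N, 3S}$ to $C_{N, R}$ for $R \ge 3S$ follows the same cut-based argument used at the start of the proof of Theorem~\ref{thm:online_lower_bound_congestion}.

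The main obstacle I anticipate is establishing the mutual independence of the $B_b$ cleanly: a priori, the algorithm's suffix-routing decisions for block $b$ can depend on all earlier suffixes, so one might worry that $B_b$ is not a function of $Z^{(b)}$ alone. The resolving observation is that $B_b$ is \emph{forced} to occur whenever $r_b$ disagrees with $Z^{(b)}$, regardless of the algorithm's subsequent routing choices for that block; the algorithm's suffix decisions can only affect whether congestion $1$ is attainable in the favorable case, never whether congestion at least $2$ is inevitable in the unfavorable one. This reduces $\mathbf{1}[B_b]$ to a deterministic function of $Z^{(b)}$, from which independence across blocks is immediate.
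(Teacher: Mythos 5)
Your proposal follows essentially the same route as the paper: partition the ToR switches into $S = \lfloor R/3 \rfloor$ disjoint blocks, place an isomorphic copy of the $(X,Y)$ gadget from Theorem~\ref{thm:online_lower_bound_congestion} in each, draw the $2^S$ suffix combinations uniformly, invoke Lemma~\ref{lem:sequences} to show a deterministic algorithm avoids congestion $2$ on at most one combination, and finish with Yao's Minimax Principle exactly as the paper does. One minor imprecision: $\mathbf{1}[B_b]$ is not literally a deterministic function of $Z^{(b)}$ (the algorithm could also misroute a suffix in the favorable case), but the \emph{forced-failure} events $B_b' \subseteq B_b$, defined by ``$r_b$ is incompatible with $Z^{(b)}$,'' are independent coin flips, and $\Pr[\bigcup_b B_b] \ge \Pr[\bigcup_b B_b'] = 1 - 2^{-S}$ is all you use, so the argument stands.
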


\begin{proof}[\textbf{Proof of Theorem~\ref{thm:online_lower_bound_congestion_random}}]
    The proof of the theorem makes use of Yao's Minimax Principle, which is recalled below in the context of the minimum congestion routing problem in Clos networks. Let $\mathscr{S}$ be the set of all sequences of flows, and $\mathscr{A}$ be the set of all deterministic algorithms. Given a probability distribution $p$ over $\mathscr{S}$ and a probability distribution $q$ over $\mathscr{A}$, denote by $\mathcal{S}_p$ the random sequence drawn from $p$, and by $A_q$ the randomized algorithm drawn from $q$. We write $E[ A_q( \mathcal{S} ) ]$ for the expected congestion of $A_q$ for a sequence $\mathcal{S}$, and $E[ A( \mathcal{S}_p ) ]$ for the expected congestion of a deterministic algorithm $A$ for $\mathcal{S}_p $.
    
    \begin{lemma}[Yao's Minimax Principle~\cite{Yao_1977, Motwani_1995}]
        For every randomized algorithm $A_q$ and every random sequence $\mathcal{S}_p$, there is a sequence $\mathcal{S}$ of flows for which the expected congestion of $A_q$ for $\mathcal{S}$ is at least the expected congestion of the optimal deterministic algorithm for $\mathcal{S}_p$:
        \begin{align*}
            \max_{ \mathcal{S} \in \mathscr{S} } E[ A_q( \mathcal{S} ) ] \, \ge \, \min_{A \in \mathcal{A}} E[ A( \mathcal{S}_p  ) ].
        \end{align*}
    \end{lemma}

    Therefore, we design a random sequence based on the sequences $X$ and $Y$ introduced earlier for which the expected congestion of an optimal deterministic algorithm for that random sequence is at least $2 {-} \nicefrac{1}{2^S}$. 

    \vspace{7pt}

    \textbf{Designing the random sequence of flows.} The posited random sequence is denoted by $\mathcal{S}_p$ and is drawn from the uniform distribution $p$ over the following set of $2^S$ sequences. The set of $2^S$ sequences is denoted by $\{ \, \mathcal{S}_i \, | \, i \in [ 2^S ] \, \}$, with $\mathcal{S}_i \coloneq ( \, \mathcal{S}^j_i \, | \, j \in [ S ] \, )$. Consider that the ToR switches are divided into $S$ blocks, with the $j$'th block designating input switches $I_{3(j-1)+1}$ through $I_{3(j-1)+3}$ and output switches $O_{3(j-1)+1}$ through $O_{3(j-1)+3}$, $j \in [S]$. Denote by $X_j$ and $Y_j$, respectively, the translation of $X$ and $Y$ to the $j$'th block. Each of the $2^S$ sequences $S_i$ is composed of $S$ subsequences, with each of the subsequences $S^j_i$ equal to $X_j$ or $Y_j$. The arrival order among the subsequences within a sequence is arbitrary. Let $< b_j( i ) \, | \, j \in [ S ] >$ be the binary representation of an integer $i \in [0, 2^S {-} 1]$.
    
    \begin{itemize}
        \item If $b_j(i-1) = 0$, then $\mathcal{S}^j_i = X_j$; otherwise, $\mathcal{S}^j_i = Y_j$, for all $i \in [2^S]$ and $j \in [S]$.
    \end{itemize}

    The key property of the set $\{ \, \mathcal{S}_i \, | \, i \in [ 2^S ] \, \}$, detailed in the next lemma, is that no online deterministic algorithm can return a link-disjoint routing for more than one sequence in the set. Consequently, every deterministic online algorithm returns a routing with congestion at least $2$ for at least $2^S{-}1$ of the $2^S$ sequences, as formalized in the final proof step.
    
    \begin{lemma}
        \label{lemma:supersequences}   
        Every deterministic online algorithm returns a link-disjoint routing for at most one sequence in the set $\{ \mathcal{S}_i \, | \, i \in [ 2^S ] \}$.
    \end{lemma}

    \begin{proof}
        We show that for every pair of sequences in the set $\{ \mathcal{S}_i \, | \, i \in [ 2^S ] \}$ no deterministic online algorithm can return a link-disjoint routing for both sequences. Consider an arbitrary deterministic online algorithm and two sequences $\mathcal{S}_{i_1}, \mathcal{S}_{i_2} \in \{ \mathcal{S}_i \, | \, i \in [ 2^S ] \}$ such that $\mathcal{S}_{i_1} \neq \mathcal{S}_{i_2}$, meaning that $\mathcal{S}^j_{i_1} \neq \mathcal{S}^j_{i_2}$ for some $j \in [S]$. From Lemma~\ref{lem:sequences}, the algorithm cannot return a link-disjoint routing for both $\mathcal{S}^j_{i_1}$ and $\mathcal{S}^j_{i_2}$, implying that it cannot return a link-disjoint routing for both $\mathcal{S}_{i_1}$ and $\mathcal{S}_{i_2}$. 
    \end{proof}

    \pagebreak

    \textbf{Calculating the expected congestion of a deterministic algorithm.} Consider an arbitrary deterministic online algorithm $A$. From Lemma~\ref{lemma:supersequences}, we write 
    \begin{align*}
        E[ A( \mathcal{S}_p  ) ] & \ge \frac{1}{2^S} \times 1 + (1 - \frac{1}{2^S}) \times 2 \\
                                 &   = 2 - \frac{1}{2^S}, 
    \end{align*}    
    to deduce that the expected congestion of an optimal deterministic algorithm for $\mathcal{S}_p$ is at least $2 {-} \nicefrac{1}{2^S}$. Consequently, from Yao's Minimax Principle, we conclude that for every randomized algorithm there is a sequence of flows for which the expected congestion is at least $2 {-} \nicefrac{1}{2^S}$.
\end{proof}

\begin{corollary}
    \label{cor:online_lower_bound_congestion}
    Consider a Clos network $C_{N, R}$, for $N \ge 2$ and $R \ge 3$. For every $c < 2 - \nicefrac{1}{2^S}$, where $S = \lceil \nicefrac{R}{3} \rceil$, there is no $c$-approximation randomized online algorithm for the minimum congestion routing problem in Clos networks.
\end{corollary}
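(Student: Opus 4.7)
The plan is to derive the corollary as an immediate consequence of Theorem~\ref{thm:online_lower_bound_congestion_random} once we pair it with the basic fact that unit-demand instances admit a link-disjoint routing. First I would fix an arbitrary randomized online algorithm $A_q$ and appeal to Theorem~\ref{thm:online_lower_bound_congestion_random} to obtain a sequence $\mathcal{S}$ of unit-demand flows for which the expected congestion of $A_q$ on $\mathcal{S}$ is at least $2-\nicefrac{1}{2^S}$. The sequences used in the proof of that theorem are built from the blocks $X_j, Y_j$ described there, and each flow in those blocks has demand $1$, so $\mathcal{S}$ is a unit-demand instance.

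Next I would lower bound the approximation ratio. Since every source and every destination is incident to at most one flow in $\mathcal{S}$ (flows from a common input/output switch are attached to distinct servers of that switch, as stipulated in the construction of Theorem~\ref{thm:online_lower_bound_congestion}), Corollary~\ref{thm:rearrangeability_2} applies and gives $OPT(\mathcal{S}) \le 1$. Combined with the trivial bound $OPT(\mathcal{S}) \ge 0$ and the fact that $\mathcal{S}$ is non-empty so that $OPT(\mathcal{S})$ is at least some positive value (indeed exactly $1$ in the constructions used), the ratio between the expected congestion of $A_q$ on $\mathcal{S}$ and $OPT(\mathcal{S})$ is at least $2-\nicefrac{1}{2^S}$. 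Hence $A_q$ cannot be a $c$-approximation algorithm for any $c < 2-\nicefrac{1}{2^S}$, and since $A_q$ was arbitrary the corollary follows.

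The argument is essentially mechanical; the only subtlety worth spelling out is that the instance delivered by Theorem~\ref{thm:online_lower_bound_congestion_random} indeed has $OPT=1$ rather than $OPT<1$, which is what makes the multiplicative gap between the algorithm's expected congestion and the optimum equal the additive bound $2-\nicefrac{1}{2^S}$. Once that is noted, no further machinery is required, so I do not anticipate any real obstacle beyond carefully matching the $\lfloor R/3 \rfloor$ vs.\ $\lceil R/3 \rceil$ notation and noting that the theorem's guarantee is preserved under the approximation-ratio interpretation.
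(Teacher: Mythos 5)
Your proof is correct and is precisely the implicit argument behind the corollary, which the paper leaves unproved: take the sequence delivered by Theorem~\ref{thm:online_lower_bound_congestion_random}, observe that it is a unit-demand instance with at most one flow per source/destination so that $OPT\le 1$ by Corollary~\ref{thm:rearrangeability_2} (and $OPT\ge 1$ since any unit-demand flow forces congestion at least $1$ on the link $I_{i(f)}M_{r(f)}$), and conclude that the expected approximation ratio is at least $2-\nicefrac{1}{2^S}$. One small thing to tighten: the phrase ``the trivial bound $OPT\ge 0$'' does no work and could mislead a reader into thinking the ratio might be unbounded; the argument needs $OPT\le 1$ (supplied by Corollary~\ref{thm:rearrangeability_2}) together with $OPT>0$, and in fact $OPT=1$ here.

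The one issue worth flagging more sharply than you do is the $\lfloor R/3\rfloor$ versus $\lceil R/3\rceil$ mismatch. Theorem~\ref{thm:online_lower_bound_congestion_random} proves the bound with $S=\lfloor R/3\rfloor$ (the construction partitions the $R$ ToR switches into disjoint blocks of $3$, and no more than $\lfloor R/3\rfloor$ such blocks fit), while the corollary as stated uses $S=\lceil R/3\rceil$, which is strictly larger whenever $3\nmid R$ and hence gives a strictly stronger claim that does \emph{not} follow from the theorem. This is almost certainly a typo in the paper; the corollary should read $S=\lfloor R/3\rfloor$. You treat this as a mere notational alignment step, but it is a genuine discrepancy — your derivation only establishes the corollary with floor, which is the version that is actually supported.
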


It not hard to show that, if every flow has demand $1$, then the Unsorted Greedy algorithm returns a routing with congestion at most $2$, and thus approximates a minimum congestion routing by a factor of $2$. Hence, in this special case, the lower bound of Corollary~\ref{cor:online_lower_bound_congestion} is tight.

\section{Related Work}
\label{sec:related_work}

\quad \textbf{Formal analysis of minimum congestion routing.} Chiesa, Kindler, and Shapira~\cite{Chiesa_2017} present a formal analysis of minimum congestion routings of unsplittable flows in Clos networks. However, their setting differs from ours in that they assume that the topology modeling a data-center is a \textit{folded} Clos network rather than an \textit{unfolded} Clos network, which is the more accurate representation considered in our work.
\unskip\footnote{An unfolded Clos network is obtained from a folded Clos network by mirroring the latter around the middle switches, and establishing a direction for traffic forwarding, with the capacity of each link in the unfolded network half the capacity of the corresponding link in the folded network. Since in almost every wired network deployed data transmission is independent between the two directions of a link~\cite{Pfister_2001, Desanti_2006, Birrittella_2015, ethernet}, the unfolded Clos network is the more accurate representation of a data-center.}
Consequently, their results are more pessimistic than ours. First, they show that for a set of flows with unit demands deciding whether there is a routing with congestion equal to $1$ is NP-complete, and thus conclude that it is impossible to approximate a minimum congestion routing by a factor less than $2$; with unfolded Clos networks, it is known that this problem is in $P$~\cite{Hwang_1983}, and it can be shown that both the Sorted Greedy and the Melen-Turner algorithm approximate a minimum congestion routing by a tight factor of $2$. Second, they give a local-search algorithm that, while there is a flow and a path such that re-assigning that flow to that path decreases the congestion of the routing, re-assigns the flow; with unfolded Clos networks, it can be shown that their algorithm approximates a minimum congestion routing by a tight factor of $3$. \newline

\textbf{Multirate rearrangeability problem.} The multirate rearrangeability problem in Clos networks is the closest problem to the minimum congestion routing problem. The setting consists of a Clos network whose number of servers per ToR switch and of ToR switches is fixed but whose number of middle switches is variable, and a set of flows with demands limited by the unit capacity of links between servers and ToR switches. The goal is to find a routing with congestion at most $1$ while minimizing the number of middle switches used.

The best known algorithm is given by Khan and Singh~\cite{Khan_2015}, and shows that it is possible to route every set of flows with congestion at most $1$ using at most $\lceil \nicefrac{20}{9} \, N \rceil$ middle switches, where $N$ is the number of servers per ToR switch. The algorithm fixes $\lceil \nicefrac{20}{9} \, N \rceil$ middle switches, and routes a set of flows in two phases. The first phase considers a subset of the flows with demand at least $\nicefrac{1}{10}$, and finds a link-disjoint routing for these flows. The second phase sorts the flows in decreasing order of demands, and assigns each flow to an arbitrary middle switch for which the congestion does not exceed $1$. Our algorithm draws inspiration from the two-phase approach of the Khan-Singh algorithm. The best known lower bound is given by Ngo and Vu~\cite{Ngo_2003}, and shows that there are sets of flows for which every routing with congestion $1$ uses at least $\lceil \nicefrac{5}{4} \, N \rceil$ middle switches. While the underlying construction yields a lower bound of $\nicefrac{6}{5}$ on worst-case congestion, we improve this bound to $\nicefrac{3}{2}$.

\section{Discussion and Open Questions}
\label{sec:discussion}

\quad Today, the commodification of network capacity has made congestion a key consideration~in the design of data-center topologies and routing algorithms. However, under the realistic constraint of flow unsplittability, existing algorithms provide only weak worst-case bounds on congestion. Our work presents the first characterization of upper and lower bounds~on~the minimum congestion achievable in Clos networks with unsplittable flows. In particular,it provides upper bounds of $\nicefrac{9}{5}$ and lower bounds of $\nicefrac{3}{2}$ on the worst-case congestion~and approximation by polynomial-time algorithms, thus breaking through the barrier of~$2$ established by existing heuristics, and raising the bar significantly away from $1$. It also provides lower bounds of $2$ on the worst-case congestion and approximation by online algorithms, thus separating the minimum congestion achievable in the offline and online settings.

Our work leaves several open questions. In the offline setting, the main ask is to eliminate the discrepancy with respect to congestion and approximation between the $\nicefrac{9}{5}$ factor yielded by the new routing algorithm and the $\nicefrac{3}{2}$ factor yielded by the new lower bounds, the latter of which we believe is tight. While the analysis of the new algorithm can be improved, arriving at the $\nicefrac{3}{2}$ factor should require new techniques. Similarly to the offline setting, in the online setting, the key question is to narrow the gap with respect to congestion and approximation between the $2$ factor yielded by the new lower bounds and the $3$ factor yielded by the Unsorted Greedy algorithm. Moving beyond unsplittable flows, it is plausive that future data-centers deploy transport protocols that implement some degree of flow splittability. In this context, it is important to understand the trade-off between the number of paths over which each flow can be split and the minimum congestion achievable.

\bibliographystyle{alpha}

\newcommand{\etalchar}[1]{$^{#1}$}

\end{document}